\newtheorem{thm}{Theorem}
\newtheorem{lem}{Lemma}
\newtheorem{defn}{Definition}
\begin{document}

\title{Joint Source-Channel Cooperative Transmission over Relay-Broadcast Networks}
\author{%
  \authorblockN{Deniz G\"{u}nd\"{u}z\authorrefmark{1},
    Elza Erkip\authorrefmark{2},
    Andrea Goldsmith\authorrefmark{3}, H. Vincent Poor\authorrefmark{4}
  }\\
  \authorblockA{\small
    \authorrefmark{1}CTTC, Barcelona, Spain\\
  }
  \authorblockA{%
    \authorrefmark{2}Dept. of Electrical Eng., Polytechnic Institute of New York University, Brooklyn, NY\\
  }
  \authorblockA{%
    \authorrefmark{3}Dept. of Electrical Eng., Stanford Univ., Stanford, CA\\
  }
  \authorblockA{%
    \authorrefmark{4}Dept. of Electrical Eng., Princeton Univ., Princeton, NJ\\
  }

  Email: deniz.gunduz@cttc.es, elza@poly.edu, andrea@wsl.stanford.edu, poor@princeton.edu
   \thanks{This work is supported in part by a Marie Curie grant funded by European
Union's Seventh Framework Programme (FP7), by the Spanish Government under project TEC2010-17816 (JUNTOS), by the National Science Foundation under Grant CCF-0635177, by the DARPA ITMANET program under grant 1105741-1-TFIND and by the ARO under MURI award W911NF-05-1- 0246.}
}

\date{}
\maketitle

\begin{abstract}
Reliable transmission of a discrete memoryless source over a multiple-relay relay-broadcast network is considered. Motivated by sensor network applications, it is assumed that the relays and the destinations all have access to side information correlated with the underlying source signal. Joint source-channel cooperative transmission is studied in which the relays help the transmission of the source signal to the destinations by using both their overheard signals, as in the classical channel cooperation scenario, as well as the available correlated side information. Decode-and-forward (DF) based cooperative transmission is considered in a network of multiple relay terminals and two different achievability schemes are proposed: i) a regular encoding and sliding-window decoding scheme without explicit source binning at the encoder, and ii) a semi-regular encoding and backward decoding scheme with binning based on the side information statistics. It is shown that both of these schemes lead to the same source-channel code rate, which is shown to be the \textit{source-channel capacity} in the case of i) a physically degraded relay network in which the side information signals are also degraded in the same order as the channel; and ii) a relay-broadcast network in which all the terminals want to reconstruct the source reliably, while at most one of them can act as a relay.
\end{abstract}

\section{Introduction}\label{s:introduction}

A relay network consists of a source-destination pair and dedicated relay terminals that help the transmission of messages from the source to the destination. The classical relay channel model \cite{Cover:IT:79} focuses on the maximum channel coding rate that can be achieved with arbitrarily small probability of error. Most studies on the relay channel following \cite{Cover:IT:79} focus solely on the channel coding aspects of relaying, motivated by the improvement in the capacity, reliability or coverage extension provided by the relay terminals. On the other hand, in some applications, such as sensor networks, the relays might have partial information about the source signal obtained through their own sensing capabilities. This additional side information can be used to improve the end-to-end system performance. The advantages of exploiting the correlated side information in a sensor network from a purely source coding perspective have been illustrated in \cite{Cristescu:IT:05}.

Here we consider the transmission of a discrete memoryless (DM) source over a DM relay-broadcast network with multiple relays and destinations, in which all the terminals in the network have access to their own correlated side information. The relays in the network are dedicated terminals whose only goal is to help the transmission of the source signal to the destinations. Some of the destinations also have channel inputs through which they can also relay the source signal to each other. The goal is the reliable (lossless) transmission of the underlying source signal to the destination(s), and the problem is to characterize the maximum number of source symbols per channel use\footnote{Here, the ``channel use'' refers to the use of the whole network, not the use of the separate source-relay or relay-destination channels. A more appropriate term would be the ``network use'', but we stick here to the more common terminology in the literature.} that can be transmitted reliably, called the \textit{source-channel capacity}. This is a joint source-channel coding generalization of the classical relay network problem.

In this model the transmission scheme should exploit the availability of the side information at the network terminals as well as the overheard channel transmissions. Note that the classical channel cooperation ignores the side information at the terminals. However, this can lead to a significant performance loss. Consider, for example, a single relay channel in which there is no channel from the source terminal to the relay, i.e., the relay channel output is independent of the source terminal's channel input, while the relay has access to side information correlated with the underlying source signal. In this case the relay can still cooperate with the source terminal by forwarding its side information to the destination. This special case is called the \textit{one-helper problem}, and the benefits of cooperation in this setup are shown in \cite{Behroozi:ICC:07} and \cite{Aguerri:Globecom:10}. Classical channel cooperation schemes cannot exploit the side information at the relay terminal, since their focus is on the processing of the overheard signal received by the relay.

Several channel coding techniques have been proposed for the relay channel \cite{Cover:IT:79}. In the decode-and-forward (DF) protocol the relays decode the underlying message, and cooperate with the source terminal to forward it to the destination. While not optimal in general, DF achieves the capacity in a physically degraded relay channel \cite{Cover:IT:79}. Here, we focus on the DF protocol in the joint source-channel transmission setting and propose multiple-relay extensions that exploit the side information at the relays and the destinations.

The DF protocols in the literature are categorized based on the codebook sizes and the decoding strategy. In \emph{irregular encoding and successive decoding} \cite{Cover:IT:79}, the relay and the source codebooks have different sizes and the destination applies successive decoding. In \textit{regular encoding and sliding-window decoding}, introduced in \cite{Carleil:IT:82}, the source and the relay codebooks have the same size and the destination decodes each source message by using two consecutive channel blocks. Finally, in \textit{regular encoding and backward decoding}, introduced in \cite{Willems:thesis}, the destination waits until all channel blocks are received, and decodes the messages starting from the last block and going backwards. For single source-single destination relay networks all encoding schemes lead to the same set of achievable rates despite having different delay behaviors; however, this is not the case in general, and backward decoding might lead to higher rates when there are multiple source terminals in the network \cite{Laneman:ISIT:04}. DF channel coding is extended to multiple-relay networks in  \cite{Kramer:IT:05}, \cite{Aref:thesis}, \cite{Xie:IT:05}, \cite{Sikora:IT:06} and \cite{Razaghi:IT:09}. While \cite{Aref:thesis} and \cite{Razaghi:IT:09} consider irregular encoding, \cite{Kramer:IT:05} and \cite{Xie:IT:05} study an extension of the regular encoding and sliding-window decoding scheme, and finally \cite{Sikora:IT:06} and \cite{Kramer:Allerton:05} extend the backward decoding strategy to multiple relays.

We propose two different \textit{joint source-channel cooperation} protocols based on DF relaying. In particular we consider the joint source-channel coding extensions of the sliding-window \cite{Kramer:IT:05}, \cite{Xie:IT:05} and backward decoding schemes \cite{Sikora:IT:06}, \cite{Kramer:Allerton:05}. These two transmission strategies differ in terms of the sizes of the codebooks used and the decoding delays, as well as the source encoding techniques; hence, these two schemes offer a tradeoff between decoding delay and complexity in the system, while achieving the same performance in terms of the source samples that can be transmitted per channel use, i.e., the source-channel code rate.

In transmitting a source signal to another user with correlated side information, the classical random coding approach developed by Slepian and Wolf in \cite{Slepian:1973} is to use \emph{binning} to reduce the amount of transmitted information. Source binning, which will be defined more rigorously in Section \ref{s:PtP}, refers to dividing the possible source sequences into groups and sending only the index of the group rather than the index of the sequence. The decoder can then decode the source sequence using its side information together with the bin index. In the relay network setup with the DF protocol, due to the varying quality of side information at the terminals, separate binning is required for each terminal, and the corresponding bin indices are transmitted with channel codes at different rates. This will be the approach taken in the construction of the backward decoding scheme. However; we will see that transmission at the same rate is also possible without resorting to any binning operation. This will be the approach for building the sliding-window decoding scheme.

We should also remark that the proposed protocols are not expected to achieve the optimal performance in the general setting since our problem is a generalization of the classical relay network problem, which remains open. However, we prove that the proposed DF-based protocols achieve the optimal source-channel code rate, i.e., the source-channel capacity, in a physically degraded setting in which both the channel outputs and the side information sequences are degraded in the same order, and in a relay-broadcast network with one source and multiple destinations, such that at most one of the destinations can also act as a relay.

The problem of joint source-channel cooperative transmission has been previously studied for a single relay channel in \cite{Gunduz:ITW:07}, \cite{Kwak:ISIT:07} and \cite{Smith:Allerton:04}, and for a multiple-relay network in \cite{Sefidgaran:ISIT:09} and \cite{Gunduz:CrownCom:10}. The techniques proposed in all these works are based on DF relaying with different transmission techniques. While semi-regular encoding and backward decoding with explicit binning at the source encoder is proposed in \cite{Gunduz:ITW:07}, irregular encoding/ successive decoding with and without explicit binning is considered in \cite{Kwak:ISIT:07} and \cite{Smith:Allerton:04}, respectively. In \cite{Sefidgaran:ISIT:09} a regular encoding/ sliding-window decoding scheme with explicit binning is considered in the multiple-relay setting. The lossy version of joint source-channel cooperative transmission is studied in \cite{Gunduz:ISIT:08b}.

The rest of the paper is organized as follows: In Section \ref{s:system_model}, we
introduce the system model and the problem. In Section \ref{s:PtP} we illustrate the difference between binning and non-binning based joint source-channel coding schemes in a point-to-point scenario. The main results of the paper are stated in Section \ref{s:results}, in which the focus is on a relay network with a single destination terminal. In Section \ref{s:relay_bc} we extend the results to relay broadcast networks, and show that the proposed achievability technique based on joint source-channel DF scheme is optimal when there is only one destination with transmission capability. The paper is concluded in Section \ref{s:conc}. The proofs of our main results are detailed in the Appendices.

In this paper we denote random variables by capital letters, sample values by the respective lower case letters, and alphabets by the respective calligraphic letters. The cardinality of set $\mathcal{A}$ is denoted by $|\mathcal{A}|$. For $k \leq n$, the sequence $(X_k,\ldots,X_n)$ will be denoted by $X_k^n$, while $X^n$ will be used for $X_1^n$. The complement of a certain element $X_i$ in a vector $X^n$ will be denoted by $X_i^c \triangleq (X_1, \ldots, X_{i-1}, X_{i+1}, \ldots,X_n)$. 

Let $\pi(\cdot)$ be an injective function\footnote{A function $f$ on a set $\mathcal{A}$ is injective if for all $a, b \in \mathcal{A}$, if $f(a) = f(b)$, then $a = b$; that is, $f(a) = f(b)$ implies $a = b$.} from a set $\{1, \ldots, N\}$ to set $\{1, \ldots, K\}$ with $1 \leq N \leq K$. For $1 \leq i<j \leq N$, we define $\pi(i:j) \triangleq \{\pi(i), \pi(i+1), \ldots, \pi(j)\}$. We also define, for a set $C=\{c_1, \ldots, c_n\}$, $n\in \mathds{Z}^+$ and $c_i\in \mathds{Z}^+$, $X_{C}\triangleq (X_{c_1}, \ldots, X_{c_n})$.

\section{Problem Setup}\label{s:system_model}

We first consider the relay network with a single source-destination pair and multiple relays. We have a network of $K+2$ terminals (see Fig. \ref{f:system_model}): terminal $T_0$ is the source terminal observing the source signal $S_0$, terminals $T_i$ for $i=1, \ldots, K$ are the $K$ relay terminals each observing a different correlated side information signal $S_i$, and terminal $T_{K+1}$ is the destination terminal with its own correlated side information signal $S_{K+1}$. The underlying DM relay channel is characterized by the conditional distribution
\begin{align*}
p & (y_1^n, \ldots, y_{K+1}^n|x_0^n, x_1^n, \ldots, x_{K}^n) = \prod_{t=1}^n
p_{Y_1,\ldots,Y_{K+1}|X_0,\ldots,X_{K}}(y_{1,t},\ldots, y_{K+1,t}|x_{0,t}, \ldots, x_{K,t}),
\end{align*}
where $x_{i,t} \in \mathcal{X}_i$ and $y_{i,t} \in \mathcal{Y}_i$, respectively, are the channel input and output of terminal $T_i$ at time $t$; and the finite sets $\mathcal{X}_i$ and $\mathcal{Y}_i$ are the corresponding input and output alphabets. We denote this channel by $(\mathcal{X}_0, \ldots, \mathcal{X}_{K}, p(y_1^n, \ldots, y_{K+1}^n|x_0^n, x_1^n, \ldots, x_{K}^n), \mathcal{Y}_1, \ldots, \mathcal{Y}_{K+1})$.

We consider DM independent and identically distributed (i.i.d.) signals $(S_0,\ldots,S_{K+1})$ which are arbitrarily correlated according to a joint distribution
$p(s_0,\ldots,s_{K+1})$ over a finite alphabet $\mathcal{S}_0 \times
\cdots \times \mathcal{S}_{K+1}$. The sequence $\{S_{0,j}\}_{j=1}^\infty$ is denoted as the source sequence while $\{S_{i,j}\}_{j=1}^\infty$, $i=1, \ldots, K+1$, is the side information sequence available at terminal $T_i$. We denote the set composed of the source and the side information signals by $(\mathcal{S}_0, \ldots, \mathcal{S}_{K+1}), p(s_0,\ldots,s_{K+1})$.

\begin{figure*}
\centering
\small
\psfrag{X0}{$X_0$}\psfrag{X1}{$X_1$}\psfrag{Xk}{$X_K$}
\psfrag{Y1}{$Y_1$}\psfrag{Yk}{$Y_K$}\psfrag{Yk1}{$Y_{K+1}$}
\psfrag{S}{$\hspace{-0.1in}S_0$}\psfrag{S1}{$S_1$}\psfrag{Sk}{$S_K$}\psfrag{Sk1}{$S_{K+1}$}
\psfrag{hS}{$\hat{S}_0$}
\psfrag{Source}{$\mathrm{Source,T_0}$}
\psfrag{Kanal}{$\mathrm{Channel}$} \psfrag{R1}{$\mathrm{Relay~T_1}$}\psfrag{Rk}{$\mathrm{Relay~T_K}$}
\psfrag{D}{$\mathrm{Destination,~T_{K+1}}$}
\psfrag{Ch}{$p(y_1, \ldots, y_{K+1}|x_0, x_1, \ldots, x_{K})$}
\includegraphics[width=6.2in]{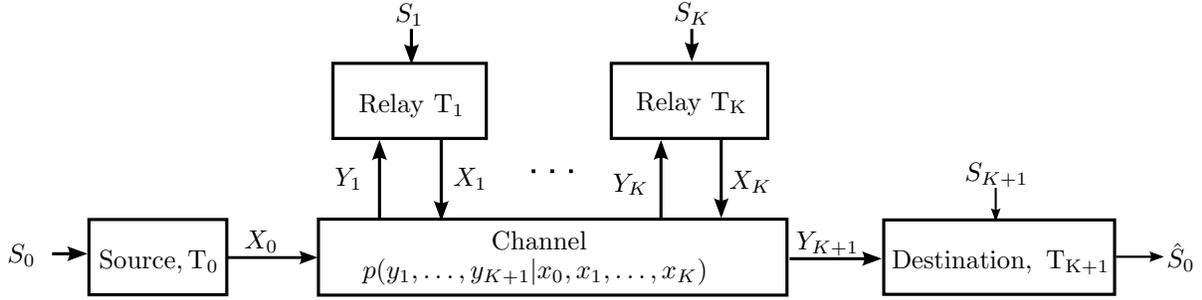}
\caption{Transmission of a discrete memoryless source over a relay network with correlated side information.} \label{f:system_model}
\end{figure*}

\begin{defn}
An $(m,n)$ joint source-channel code with the sources $(\mathcal{S}_0, \ldots, \mathcal{S}_{K+1})$ and the relay network $(\mathcal{X}_0, \ldots, \mathcal{X}_{K}, p(y_1^n, \ldots, y_{K+1}^n|x_0^n, x_1^n, \ldots, x_{K}^n), \mathcal{Y}_1, \ldots, \mathcal{Y}_{K+1})$ consists of
\begin{enumerate}
  \item An encoding function $f_0^{(m,n)}:\mathcal{S}^m \rightarrow \mathcal{X}_0^n$ at the source terminal $T_0$ that maps its observation $S_0^m$ to a channel codeword of length-$n$, i.e., $X_0^n=f_0^{(m,n)}(S_0^m)$.
  \item A series of encoding functions for each relay terminal $T_i$, $i=1, \ldots, K$: $f_i^{(m,n)} = \{ f_{i,1}^{(m,n)}, \ldots, f_{i,n}^{(m,n)}\}$, such that the encoding function at time instant $t$ depends on the previous channel outputs $Y_i^{t-1}$ as well as its side information vector $S_i^m$. We have
    \[
        X_{i,t} = f_{i,t}^{(m,n)}(Y_{i,1}, \ldots, Y_{i, t-1}, S_i^m),
    \]
    for $1\leq i\leq K$ and $1\leq t \leq n$.
  \item A decoding function at the destination terminal $T_{K+1}$ which maps the channel output $Y_{K+1}^n$ and its side information $S_{K+1}^m$ to the estimate $\hat{S}_0^m$ by
    \begin{eqnarray}
        g^{(m,n)}: \mathcal{Y}_{K+1}^n \times \mathcal{S}_{K+1}^m \rightarrow \mathcal{S}_0^m,
    \end{eqnarray}
i.e., $\hat{S}_0^m = g^{(m,n)}(Y_{K+1}^n, S_{K+1}^m)$.
\end{enumerate}
\end{defn}

The goal of the network is to transmit the source message $S_0^m$ to the destination terminal in a reliable manner. Reliability is based on the following definition of average probability of error.

\begin{defn}
The probability of error for an $(m,n)$ code is defined as
\[
    P_e^{(m,n)} = Pr\left\{ \hat{S}_{K+1}^m \neq S_0^m \right\},
\]
where the averaging is over both the source and the channel distributions.
\end{defn}

\begin{defn}\label{d:achieve}
We say that the source-channel code rate of $r$ source samples per channel use is \emph{achievable} if there exists a sequence of $(m,n)$ codes satisfying $\frac{m}{n} \geq r$ whose probability of error vanishes with increasing block size, i.e., $P_e^{(m,n)} \rightarrow 0$ as $m,n \rightarrow \infty$.
\end{defn}

In \cite{Gunduz:IT:09} we have defined the \textit{source-channel rate} as the number of channel uses required to transmit each source sample, that is, as the inverse of the \textit{source-channel code rate} we have defined here. The latter definition is in accordance with the definition of the rate of a joint source-channel code used in \cite{Gastpar:IT:03}. Similar to the definition of channel capacity, we define a source-channel capacity of a network which considers both the source and the channel characteristics.

\begin{defn}\label{d:achieve}
The \emph{source-channel capacity} of a network is defined as the supremum of all achievable source-channel code rates.
\end{defn}

\begin{figure*}
\centering
\small
\psfrag{X}{$X_0$}\psfrag{Y}{$Y_1$}
\psfrag{S}{$\hspace{-0.1in}S_0$}\psfrag{S2}{$S_1$}
\psfrag{hS}{$\hat{S}_0$}
\psfrag{Source}{$\mathrm{Source}$}
\psfrag{Ch}{$p(y_1|x_0)$}
\psfrag{D}{$\mathrm{Destination}$}
\includegraphics[width=4in]{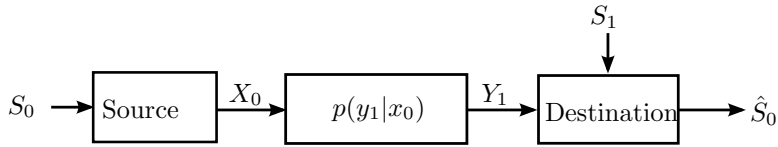}
\caption{Transmission of a discrete memoryless source over a point-to-point channel with correlated side information.} \label{f:PtP}
\end{figure*}

\section{Point-to-point Channel: To Bin or Not To Bin}\label{s:PtP}

In this section we focus on the point-to-point channel setup to introduce the basic concepts such as binning and source-channel separation, which will be helpful in understanding the coding schemes proposed for the relay network. The point-to-point channel without side information at the receiver was studied by Shannon, who proved the optimality of source-channel separation in this setup \cite{Shannon:1948}. The point-to-point channel with side information at the receiver was studied by Shamai and Verd\'{u} in \cite{Shamai:ETT:95}. In the system model introduced in Section \ref{s:system_model}, the point-to-point channel with receiver side information corresponds to the case with $K=0$ (see Fig. \ref{f:PtP}). It is shown in \cite{Shamai:ETT:95} that a source-channel code rate $r$ is achievable if there exists an input distribution $p(x_0)$ such that $r < \frac{I(X_0;Y_1)}{H(S_0|S_1)}$; and conversely, if the source-channel code rate $r$ is achievable then there exists an input distribution $p(x_0)$ such that $r \leq \frac{I(X_0;Y_1)}{H(S_0|S_1)}$. Note that $I(X_0;Y_1)$ is maximized by the capacity achieving input distribution. Hence, equivalently, the source-channel capacity of this system is $\frac{C}{H(S_0|S_1)}$, where $C$ is the channel capacity. Moreover, it is not very difficult to see that the source-channel separation theorem holds for this setup, that is, any source-channel code rate less than the source-channel capacity, $\frac{C}{H(S_0|S_1)}$ can be achieved by first applying source coding, and then transmitting the compressed source bits over the channel using a capacity achieving channel code. As pointed out in \cite{Shamai:ETT:95}, the source encoder in the case of separation is a Slepian-Wolf encoder that operates at the conditional entropy rate of $H(S_0|S_1)$ rather than the source entropy rate of $H(S_0)$, due to the availability of the correlated side information at the receiver.

Slepian-Wolf compression involves \emph{binning} of the source outcomes. The source encoder randomly distributes all possible source output sequences $S_0^m$ into $2^{mH(S_0|S_1)}$ bins, that is, it independently assigns an index uniformly distributed over $\{1, \ldots, 2^{mH(S_0|S_1)} \}$ to each of the possible source output sequences. The bin index for each source sequence is transmitted over the channel by assigning a channel codeword to each bin index. Having decoded the channel codeword, and hence, the bin index, correctly, the receiver outputs the source outcome in the corresponding bin that is jointly typical with its side information sequence.

We next present a coding scheme that generalizes this separate source-channel coding approach. In this generalized scheme, we randomly distribute all $S_0^m$ sequences into $M=2^{mR}$ bins, where $R$ is not necessarily equal to $H(S_0|S_1)$. Let $\mathcal{B}(i)$ be the set of sequences allocated to bin $i$. Then, we generate $M$ independent channel codewords of length $n$ according to distribution $\prod_{t=1}^m p(x_{0,t})$, and enumerate these codewords as $x_0^n(w)$ for $w=1, \ldots, M$. This constitutes the only codebook in the system. Encoding is done as in the separation scheme. The transmitter finds the index $i$ of the bin to which $s_0^m$ belongs, and transmits over the channel the codeword $x_0^n(i)$.

In the classical source-channel separation approach, the channel decoder, upon observing the channel output, decides a single codeword index, and conveys this index to the source decoder. This index corresponds to the bin index to which the source sequence belongs.  Then the source decoder estimates the source sequence using the bin index and the side information sequence. For the channel transmission to be successful with high probability, the rate of transmission should be less than the channel capacity. Hence, in the separation approach, it is advantageous to reduce the number of possible indices to be transmitted through source binning to achieve higher source-channel code rates. However, in the generalized scheme, we consider a joint source-channel decoder, following the approach in \cite{Tuncel:IT:06}. The decoder at the receiver looks for an index $i$ for which $x_0^n(i)$ and $Y_1^n$ are jointly typical, and at the same time, there exists exactly one source sequence $\hat{s}^m$ in bin $i$ that is jointly typical with $S_1^m$. This source sequence is the output of the decoder. Note that, due to the joint nature of decoding using both the side information and the channel output, this is not separate source and channel coding in the strict sense. We refer the readers to \cite{Tuncel:IT:06} and \cite{Gunduz:IT:09} for more discussion on source-channel separation in multi-terminal scenarios.

We have an error if there exists no or more than one such bin index $i$, or if there exists more than one jointly typical sequence within bin $i$. The probability that there is no bin index satisfying the joint typicality condition vanishes as $n$ grows. The probability of having no jointly typical source sequence within the correct bin also vanishes since $S_0^m$ and $S_1^m$ are jointly typical with high probability as $m$ grows. The probability of having another jointly typical source sequence in the same bin as $S_0^m$ is bounded by
\begin{align}\label{eq:PtP:1}
\big| \mathcal{B}(i) \bigcap A_\epsilon^m(S_0) \big| 2^{-m(I(S_0; S_1)-3\epsilon)} \leq 2^{m(H(S_0)+\epsilon)} 2^{-mR} 2^{-m(I(S_0; S_1)-3\epsilon)},
\end{align}
in which $A_\epsilon^m(S_0)$ denotes the set of $\epsilon$-typical $n$-tuples according to $P_{S_0}$. Using the classical arguments on typical sets \cite{Csiszar:book}, it is possible to show that (\ref{eq:PtP:1}) goes to zero if $R \geq H(S_0|S_1)$.

We also have an error if there exists another bin index $j$ satisfying the joint typicality conditions. The probability of this event can be bounded by
\[
2^{mR} 2^{-n(I(X_0; Y_1)-3\epsilon)} |\mathcal{B}(i) \bigcap A_\epsilon^m(S_0)|2^{-m(I(S_0; S_1)-3\epsilon)} \leq 2^{-n(I(X_0; Y_1)-3\epsilon)} 2^{m(H(S_0|S_1)-2\epsilon)},
\]
which goes to zero if $mH(S_0|S_1) < nI(X_0 ; Y_1)$. Hence, any rate $r$ satisfying $r <\frac{I(X_0; Y_1)}{H(S_0|S_1)}$ is achievable.

Now, we have a set of coding schemes each with a different number of source bins, that is, with different $R$ values satisfying $R \geq H(S_0|S_1)$. As suggested in \cite{Laneman:IT:05}, the ``joint'' decoding operation considered in the generalized scheme can equivalently be viewed as a separate source and channel decoding scheme, in which the channel decoder is a list decoder, which outputs the list of bin indices $i$ for which $x_0^n(i)$ and $Y_1^n$ are jointly typical. This list decoding approach includes separate source-channel coding as a special case with $R=H(S_0|S_1)$; in which case we have, with high probability, a single element in the list, i.e., there exists only a single bin index whose channel input codeword is typical with the channel output.

We want to point out here that, on the other extreme, this generalized scheme works without any binning, such that we generate an independent channel codeword for each possible source outcome, i.e., $R = \log |\mathcal{S}_0|$. From a practical point of view, this can be seen as transferring the complexity of binning from the encoder to the decoder, which now needs to apply joint decoding or list decoding. From a theoretical point of view, since the decoder only outputs typical source sequences as its estimate, there is no point in having more than $2^{m(H(S_0)+\epsilon)}$ bins as, otherwise, there would be bins without any typical source sequences, i.e., we have $R \leq H(S_0)$. Hence, in the rest of the paper, schemes with $R =H(S_0)$ are considered as \emph{no-binning} schemes.

In the case of a point-to-point channel, the only difference between separate source and channel coding with binning, and joint decoding with no-binning is the operation at the encoder and the decoder. However, as we will see in the following sections, in the case of relay networks these two approaches require different transmission protocols, and hence, obtain different delay performances even though they still achieve the same source-channel code rate performance.

\section{Main Results}\label{s:results}

In this section we address the main problem of source transmission over relay networks. We state our results for the single destination setup in this section while the proofs are given in the following sections. The first theorem provides an achievability result.

\begin{thm}\label{t:suff_cond}
For the DM relay network with correlated relay and destination side information, the source-channel code rate $r$ is achievable if, for all $i = 1,\ldots, N$,
\begin{align}
    r < \frac{I(X_{\pi(0:i-1)};Y_{\pi(i)} |X_{\pi(i:N-1)})}{H(S_0|S_{\pi(i)})}, \label{cond1}
\end{align}
for an $N$ satisfying $1 \leq N \leq K+1$, and for some injection $\pi(\cdot)$ from $\{0,\ldots, N\}$ to $\{0,\ldots, K+1\}$ such that $\pi(0)=0$, $\pi(N)= K+1$, and for some input distribution $p(x_0,\ldots, x_{K})$ and $\mathcal{X}_{K+1}$ is a constant.
\vspace{.1in}
\end{thm}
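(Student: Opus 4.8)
The plan is to prove achievability by a \emph{regular encoding / sliding-window decoding} scheme that marries the multiple-relay decode-and-forward construction of Xie--Kumar and Kramer \emph{et al.} \cite{Xie:IT:05,Kramer:IT:05} with the no-binning, side-information-aided joint-typicality decoder developed for the point-to-point channel in Section \ref{s:PtP}. The injection $\pi$ fixes a DF route $\pi(0)=0 \to \pi(1) \to \cdots \to \pi(N)=K+1$ through the network; terminals outside the image of $\pi$ (including the destination, whose input is constant) are silenced, so only the route nodes are active. Each route node decodes the \emph{entire} source block, exploiting both its overheard channel signal and its correlated side information, and then forwards it. I will carry out the sliding-window version; the semi-regular encoding / backward-decoding scheme with explicit Slepian--Wolf binning yields the identical region and can be treated as a variant.

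First I would set up the block structure. Transmission is split into $B$ blocks of $n$ channel uses carrying $B$ fresh source blocks of $m$ symbols, so the overall rate $\frac{Bm}{(B+N-1)n}$ tends to $\frac{m}{n}=r$ as $B\to\infty$ and the pipeline start-up cost is negligible. For the codebook I would use \emph{no binning}, exactly as in Section \ref{s:PtP}: assign a distinct message index to each typical source sequence, i.e. $R=H(S_0)$, and generate the channel codewords by superposition coding ordered along the route, so that the codeword of node $\pi(j)$ is superimposed on those of the downstream nodes $\pi(j\!+\!1),\ldots,\pi(N\!-\!1)$. With the standard block-Markov offset, node $\pi(j)$ forwards a given source-block index exactly $j$ blocks after the source introduces it, so that by the end of its window every upstream node $\pi(0),\ldots,\pi(i-1)$ has carried that block while node $\pi(i)$ and the downstream nodes are transmitting already-decoded (hence known to $\pi(i)$) older blocks.

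The heart of the argument is the decoding error analysis at each route node $\pi(i)$, $i=1,\ldots,N$. Node $\pi(i)$ declares $\hat s_0^m$ if the upstream codewords indexed by $\hat s_0^m$ are jointly typical with $Y_{\pi(i)}$ over the sliding window, \emph{conditioned} on the known codewords of $X_{\pi(i:N-1)}$, and simultaneously $\hat s_0^m$ is jointly typical with the side information $S_{\pi(i)}^m$. The coherent combining over the window collapses to the single-letter conditional mutual information $I(X_{\pi(0:i-1)};Y_{\pi(i)}|X_{\pi(i:N-1)})$, which is precisely why the upstream inputs appear unconditioned while node $\pi(i)$'s own and downstream inputs appear as conditioning. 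The dominant error event --- a wrong block $\hat s_0^m \neq s_0^m$ passing both tests --- is bounded exactly as in the point-to-point computation of \eqref{eq:PtP:1}: there are $2^{mH(S_0)}$ candidate sequences, each passing the channel test with probability about $2^{-nI(X_{\pi(0:i-1)};Y_{\pi(i)}|X_{\pi(i:N-1)})}$ and the side-information test with probability about $2^{-mI(S_0;S_{\pi(i)})}$, so the union bound vanishes iff $mH(S_0|S_{\pi(i)}) < nI(X_{\pi(0:i-1)};Y_{\pi(i)}|X_{\pi(i:N-1)})$, i.e. iff $r < I(X_{\pi(0:i-1)};Y_{\pi(i)}|X_{\pi(i:N-1)})/H(S_0|S_{\pi(i)})$. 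Imposing this at every node $i=1,\ldots,N$ (with $X_{\pi(N:N-1)}$ empty at the destination, recovering $I(X_{\pi(0:N-1)};Y_{K+1})$) gives the stated region.

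The main obstacle I anticipate is bookkeeping rather than conceptual: I must verify that the superposition nesting and the sliding-window combining really do produce the claimed conditional mutual information \emph{simultaneously and uniformly} for all $i$, and that the joint error event at each node factors cleanly into an independent channel-codeword-count term and a side-information-typicality term, so that the Section \ref{s:PtP} bound applies verbatim with the conditional MI in place of $I(X_0;Y_1)$. Secondary care is needed to confirm that the sequential per-node errors do not accumulate (a fixed number of stages, each with vanishing error) and that no binning indeed suffices at every intermediate relay, since each relay simply re-runs the point-to-point no-binning decoder of Section \ref{s:PtP} against its own side information $S_{\pi(i)}^m$. Controlling the edge effects of the $B$-block pipeline and the standard typicality slack parameters then completes the argument.
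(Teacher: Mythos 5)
Your proposal is correct and follows essentially the same route as the paper's first proof (Appendix A): regular no-binning encoding with $M=2^{m(H(S_0)+\epsilon)}$ source-indexed codewords, superposition block-Markov coding along the DF route, sliding-window joint source-channel typicality decoding at each node, and a union bound yielding exactly $mH(S_0|S_{\pi(i)}) < nI(X_{\pi(0:i-1)};Y_{\pi(i)}|X_{\pi(i:N-1)})$. The factorization concern you flag (independence of the channel-typicality events across the blocks of the window) is precisely what the paper handles by generating several independent copies of the channel codebooks and cycling through them over consecutive blocks, so your outline fills in to the paper's argument with no change of strategy.
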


Note in (\ref{cond1}) that we impose decoding constraints for only a subset of the relays in the network together with the destination, such that only $N-1$ relays decode rather than all $K$ of them. Excluding some of the relays from cooperation, the relays with a poor channel from the source terminal as well as poor side information, can potentially increase the rate in the case of DF relaying, as these relays can decode only at very low rates.


In the appendix we provide two different proofs for the achievability of Theorem \ref{t:suff_cond}. Both proofs are based on DF relaying in the joint source-channel setting, that is, the source vector $S_0^m$ is decoded in a lossless fashion by all the terminals participating in cooperation. The first proof is based on block-Markov regular encoding and sliding-window decoding without explicit binning. This achievable scheme is based on the joint source-channel decoding scheme with no binning introduced in Section \ref{s:PtP}. Here, we combine this technique with block Markov encoding in a multiple-relay setup. The typical source outcomes are mapped directly to different channel codewords rather than binning the source outputs prior to channel coding\footnote{This is equivalent to binning all source outcomes into $R = 2^{mH(S_0)}$ bins in the scheme introduced in Section \ref{s:PtP}, so that there is one typical source outcome in each bin.}. The decoders at each relay apply joint source-channel decoding (or, separate  channel and source decoders in which the channel decoder outputs a list of possible inputs rather than a single codeword index). Each relay finds the unique index for which the corresponding source codeword is jointly typical with its side information while the corresponding channel codewords are jointly typical with the received channel vectors in the preceding blocks. This is a regular coding scheme since all the terminals in the network use a codebook of the same size, which is equal to the number of typical source outputs. The details of this achievability proof are given in Appendix \ref{ss:sliding}.

The second coding scheme, which was studied in \cite{Gunduz:ITW:07} for a single relay channel, uses explicit binning at the source encoder and channel codes of different sizes for each terminal in the network. We call this scheme semi-regular encoding with backward decoding. The source is compressed (by binning) for each separate side information signal in the network, and hence a different rate of information is transmitted to each user; however, the rate of the channel codes for the terminals that have already decoded the message and are cooperating to forward it to the next terminal are the same. This is why we call this coding scheme a semi-regular encoding scheme. For decoding we use nested backward decoding \cite{Sikora:IT:06}, \cite{Kramer:Allerton:05}.  The detailed analysis of this achievability scheme is given in Appendix \ref{ss:backward}.

These two coding schemes essentially differ in terms of the delay in decoding, while each message block is decoded after a delay of $K$ channel blocks in the case of sliding-window decoding, the delay is much larger in the case of backward decoding, since the destination can start decoding only after receiving all the channel blocks. In the case of pure channel coding, the two schemes have exactly the same encoding structure; hence, the tradeoff is between the delay and the complexity. However; in the joint source-channel coding setting, there is another difference between the proposed coding techniques. While backward decoding works together with explicit binning, sliding-window decoding is based on matching the source outcome directly to a channel input. Therefore, in practical systems, the backward decoding scheme can be directly implemented using the existing point-to-point source codes and DF channel codes for the relay channel. However, the sliding-window coding scheme requires building new codes that implement the joint source-channel encoding and decoding techniques in practice.

In the following theorem it is shown that the proposed schemes achieve the source-channel capacity in a physically degraded relay network with degraded side information sequences. The definition of a physically degraded relay network is given below.

\begin{defn}\label{d:degraded}
A discrete memoryless relay network is said to be \emph{physically degraded} if
\begin{align}
p(y_{i+1},\ldots,y_{K+1} & |y_i,x_0,\ldots,x_{K}) = p(y_{i+1},\ldots,y_{K+1}|y_i,x_i,\ldots,x_{K})
\end{align}
for all $i=1,\ldots,K$, or equivalently if
\[(X_0, \ldots, X_{i-1}) \rightarrow (Y_i, X_i, \ldots,X_{K}) \rightarrow (Y_{i+1},\ldots, Y_{K+1}) \]
forms a Markov chain for all $i=1,\ldots,K$.
\end{defn}

\begin{thm}\label{t:degraded}
For a physically degraded relay network in which the side information sequences also form a Markov chain in the same order, i.e.,
\[S_0 \rightarrow S_1 \rightarrow \cdots \rightarrow S_{K+1}, \]
the source-channel capacity is given by
\begin{align}
   \sup_{p(x_0, x_1, \ldots, x_K)} \min_{i=1,\ldots, K+1} \frac{I(X_0^{i-1};Y_i |X_i^{K+1})}{H(S_0|S_i)}. \label{pdeg_cond}
\end{align}
\vspace{.1in}
\end{thm}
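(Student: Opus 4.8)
\emph{Achievability} is immediate from Theorem~\ref{t:suff_cond}. In the degraded setting I take the identity injection $\pi(i)=i$, $i=0,\ldots,K+1$, with $N=K+1$; this is admissible since $\pi(0)=0$ and $\pi(K+1)=K+1$. For this choice $X_{\pi(0:i-1)}=X_0^{i-1}$ and $X_{\pi(i:N-1)}=X_i^{K}$, so condition (\ref{cond1}) becomes $r<I(X_0^{i-1};Y_i\mid X_i^{K})/H(S_0\mid S_i)$ for every $i=1,\ldots,K+1$. Since $X_{K+1}$ is a constant we have $X_i^{K}=X_i^{K+1}$, and hence for any fixed input distribution every rate below $\min_i I(X_0^{i-1};Y_i\mid X_i^{K+1})/H(S_0\mid S_i)$ is achievable; taking the supremum over $p(x_0,\ldots,x_K)$ yields (\ref{pdeg_cond}). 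The substance of the theorem is therefore the matching converse.

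\emph{Converse.} I start from a sequence of $(m,n)$ codes with $m/n\ge r$ and $P_e^{(m,n)}\to0$. Fano's inequality applied to $\hat S_0^m=g^{(m,n)}(Y_{K+1}^n,S_{K+1}^m)$ gives $H(S_0^m\mid Y_{K+1}^n,S_{K+1}^m)\le m\delta_m$ with $\delta_m\to0$. The first key step is to transfer this decodability to every cut of the degraded network: I would establish, for each $i$, the Markov chain
\[
S_0^m\rightarrow(Y_i^n,S_i^m)\rightarrow(Y_{K+1}^n,S_{K+1}^m),
\]
so that $(Y_{K+1}^n,S_{K+1}^m)$ is a degraded observation of $(Y_i^n,S_i^m)$ with respect to the source. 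This chain fuses the channel degradedness of Definition~\ref{d:degraded} with the source degradedness $S_0\to S_1\to\cdots\to S_{K+1}$, and by data processing yields $H(S_0^m\mid Y_i^n,S_i^m)\le H(S_0^m\mid Y_{K+1}^n,S_{K+1}^m)\le m\delta_m$.

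Next, for each fixed $i$ I expand $mH(S_0\mid S_i)=H(S_0^m\mid S_i^m)=I(S_0^m;Y_i^n\mid S_i^m)+H(S_0^m\mid Y_i^n,S_i^m)$ and bound the last term by $m\delta_m$ using the step above. It then remains to single-letterize the mutual-information term into the channel cut-set quantity, namely to show $I(S_0^m;Y_i^n\mid S_i^m)\le\sum_{t=1}^n I(X_{0,t},\ldots,X_{i-1,t};Y_{i,t}\mid X_{i,t},\ldots,X_{K,t})$ up to $o(n)$; here I would use the memorylessness of the channel, the causal relay maps $X_{j,t}=f_{j,t}^{(m,n)}(Y_j^{t-1},S_j^m)$, and once more the degradedness to discard the downstream outputs $Y_{i+1}^n,\ldots,Y_{K+1}^n$ so that only $Y_i$ survives in the single-letter expression. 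Introducing a time-sharing variable $Q$ uniform on $\{1,\ldots,n\}$ and setting $X_j=X_{j,Q}$, $Y_i=Y_{i,Q}$ then gives $I(S_0^m;Y_i^n\mid S_i^m)\le nI(X_0^{i-1};Y_i\mid X_i^{K+1})$. Combining the pieces yields $m/n\le I(X_0^{i-1};Y_i\mid X_i^{K+1})/(H(S_0\mid S_i)-\delta_m)$; letting $m,n\to\infty$, minimizing over $i$ and maximizing over the induced input distribution $p(x_0,\ldots,x_K)$ recovers (\ref{pdeg_cond}).

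I expect the main obstacle to lie in the cut-set single-letterization when the relay side information is correlated with the source. Because the near-cut relay inputs $X_{1,t},\ldots,X_{i-1,t}$ depend on $S_1^m,\ldots,S_{i-1}^m$, which are correlated with $S_0^m$, the usual relay-channel cut-set manipulation must be carried out conditioned on $S_i^m$, and one must verify the precise Markov relations needed both to introduce the conditioning on $X_i^{K+1}$ and to collapse the degraded downstream outputs. Establishing the joint degradation chain of the second paragraph rigorously---treating the channel outputs and the correlated side information together rather than separately---is the delicate point on which the whole converse rests.
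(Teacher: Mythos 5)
Your achievability argument is correct and identical to the paper's: it is Theorem \ref{t:suff_cond} with $N=K+1$ and the identity injection. The converse, however, contains a genuine gap, and it sits precisely at the step you flag as ``the delicate point'': the block-level Markov chain $S_0^m \rightarrow (Y_i^n,S_i^m) \rightarrow (Y_{K+1}^n,S_{K+1}^m)$ is false in general whenever there are relays downstream of the cut (i.e., for $K\geq 2$ and $i<K$), so the data-processing inequality you build the whole converse on is unavailable. The reason is feedback through the channel: $Y_{i,t}$ depends on the downstream inputs $X_{i+1,t},\ldots,X_{K,t}$, which are driven by downstream side information and past downstream outputs, and conditioning on the common child $Y_i^n$ creates dependence between $S_0^m$ and the downstream variables. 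Concretely, take $K=2$ with the binary deterministic channel $Y_1=Y_2=X_0\oplus X_2$, $Y_3=X_2$ (this satisfies Definition \ref{d:degraded}), side information $S_1$ constant, $S_2$ a uniform bit independent of the uniform source bit $S_0$, and $S_3=S_2$ (so $S_0\rightarrow S_1\rightarrow S_2\rightarrow S_3$ holds). With $m=1$, $n=2$ and the legitimate code $X_{0,t}=S_0$, $X_{1,t}\equiv 0$, $X_{2,1}=S_2$, $X_{2,2}=Y_{2,1}\oplus S_2$, one gets $Y_3^2=(S_2,S_0)$, hence $H(S_0\,|\,Y_3^2,S_3)=0$, while $Y_1^2=(S_0\oplus S_2,\,0)$ gives $H(S_0\,|\,Y_1^2,S_1)=1$. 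Thus the inequality $H(S_0^m|Y_i^n,S_i^m)\leq H(S_0^m|Y_{K+1}^n,S_{K+1}^m)$ fails outright: a downstream relay can deliver the source to the destination through a route that never appears in node $i$'s observation. The same feedback also undermines your single-letterization sketch, since $X_{i+1,t},\ldots,X_{K,t}$ are not functions of $(Y_i^{t-1},S_i^m)$ and so cannot be inserted into the conditioning ``for free.''

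The repair --- and it is what the paper does --- is to never isolate $(Y_i^n,S_i^m)$ at the block level. In the paper's cut-set argument the entire downstream super-node pools $(Y_i^n,\ldots,Y_{K+1}^n,S_i^m,\ldots,S_{K+1}^m)$, so Fano's inequality transfers across the cut by conditioning-reduces-entropy, with no Markov chain needed: $H(S_0^m|Y_i^n,\ldots,Y_{K+1}^n,S_i^m,\ldots,S_{K+1}^m)\leq H(S_0^m|Y_{K+1}^n,S_{K+1}^m)\leq m\delta_m$. The two degradedness hypotheses are then applied only where they are actually valid: the source Markov chain on the i.i.d. single-letter entropy, $H(S_0|S_i,\ldots,S_{K+1})=H(S_0|S_i)$, and the channel degradedness on the single-letter mutual information \emph{after} single-letterization, $I(X_0^{i-1};Y_i^{K+1}|X_i^{K+1})=I(X_0^{i-1};Y_i|X_i^{K+1})$, where the per-letter chain of Definition \ref{d:degraded} does hold under any joint input distribution. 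Your plan applies both degradedness assumptions in one stroke at the block level, which is exactly where the argument breaks. (For $K=1$ your chain is actually valid, since the destination has no channel input and $Y_1$ is not affected by any terminal downstream of the cut; the argument fails only in the multi-relay case that the theorem is about.)
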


\begin{proof}
The converse for degraded relay networks follow from the cut-set bound. Consider the set $\mathcal{S}_i = \{T_0, \ldots, T_{i-1}\}$ and assume that the terminals in $\mathcal{S}_i$ all have access to the source vector $S_0^m$; hence, they can cooperate perfectly for transmitting $S_0^m$. We further assume that the remaining terminals can also cooperate perfectly by pooling all the available side information vectors $S_i^m, \ldots, S_{K+1}^m$ as well as their received channel outputs $Y_i^m, \ldots, Y_{K+1}^m$. This reduces to a point-to-point scenario for which the following is a necessary condition for reliable transmission:
\begin{align*}
   H(S_0|S_i, \ldots, S_{K+1}) &\leq r I(X_0^{i-1}; Y_i^{K+1}|X_i^{K+1}).
\end{align*}
From the degradedness assumption of the side information vectors we have
\begin{align*}
   H(S_0|S_i, \ldots, S_{K+1}) &=  H(S_0|S_i),
\end{align*}
and from the physically degraded channel assumption we have
\begin{align*}
   I(X_0^{i-1}; Y_i^{K+1}|X_i^{K+1}) = I(X_0^{i-1}; Y_i|X_i^{K+1}).
\end{align*}
We complete the proof of the theorem by considering all $\mathcal{S}_i$ sets corresponding to $i=1, \ldots, K+1$.
\end{proof}

\section{Extension to Relay-Broadcast Networks}\label{s:relay_bc}

\begin{figure*}
\centering
\psfrag{X0}{$X_0$}\psfrag{X1}{$X_1$}\psfrag{Xk}{$X_K$}\psfrag{Xk1}{$X_{K+1}$}\psfrag{Xkm}{$X_{K+L}$}
\psfrag{Y1}{$Y_1$}\psfrag{Yk}{$Y_K$}\psfrag{Yk1}{$Y_{K+1}$}\psfrag{Ykm}{$Y_{K+L}$}
\psfrag{S}{$S$}\psfrag{S1}{$S_1$}\psfrag{Sk}{$S_K$}\psfrag{Sk1}{$S_{K+1}$}\psfrag{Skm}{$S_{K+L}$}
\psfrag{hSk1}{$\hat{S}_{K+1}$}\psfrag{hSkm}{$\hat{S}_{K+L}$}
\psfrag{Source}{$\mathrm{Source}$} \psfrag{R1}{$~~~\mathrm{T_1}$}\psfrag{Rk}{$~~~\mathrm{T_K}$}
\psfrag{D1}{$\mathrm{~~T_{K+1}}$}\psfrag{Dm}{$\mathrm{~T_{K+L}}$}
\psfrag{Ch}{$p(y_1, \ldots, y_{K+L}|x_0, x_1, \ldots, x_{K+L})$}
\includegraphics[width=5in]{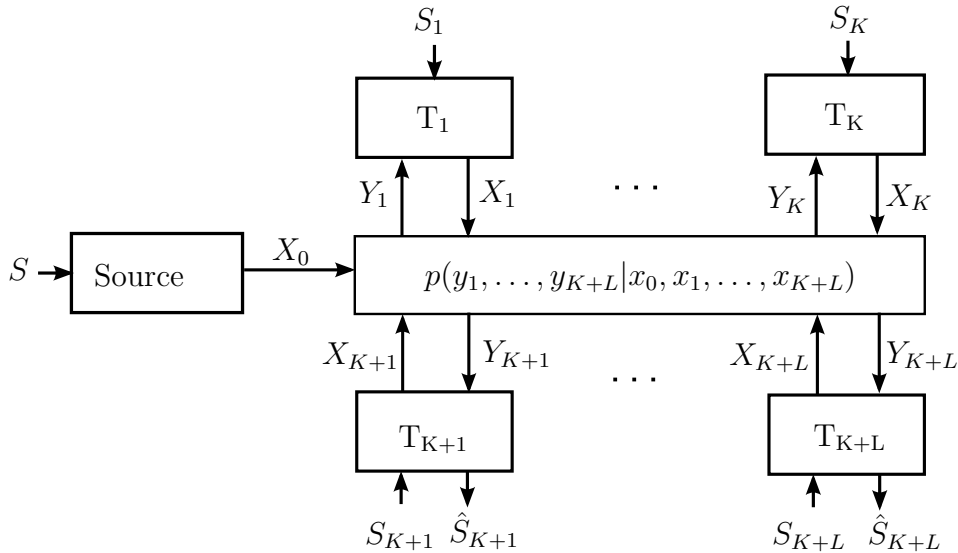}
\caption{Transmission of a discrete memoryless source over a relay-broadcast network with $K$ relay terminals and $L$ destination terminals. } \label{f:rbn}
\end{figure*}

It is possible to generalize the achievability results in Section \ref{s:results} to a relay-broadcast network, in which there are multiple receivers interested in decoding the source samples in a lossless fashion (see Figure \ref{f:rbn}). In this setting, terminals $T_{K+1}, \ldots, T_{K+L}$ not only decode the source sequence, but can also relay the decoded sequence for each other. We call this more general network a \emph{relay-broadcast network}. The following theorem provides an achievable source-channel code rate for a relay-broadcast network. The proof follows similarly to the proof of Theorem \ref{t:suff_cond}, in which a subset of the relays and the destination terminals apply the DF protocol in some given order.

In this model, the DM channel is characterized by the conditional distribution
\begin{align*}
p & (y_1^n, \ldots, y_{K+L}^n|x_0^n, x_1^n, \ldots, x_{K+L}^n).
\end{align*}
The source and side information sequences are DM and characterized by the joint distribution
\[
p(s_0,\ldots,s_{K+1})
\]
over a finite alphabet $\mathcal{S}_0 \times
\cdots \times \mathcal{S}_{K+L}$. Terminal $T_0$ maps its observation $S_0^m$ to a channel
codeword of length-$n$ by the encoding function
$f_0^{(m,n)}:\mathcal{S}^m \rightarrow \mathcal{X}_0^n$. The
terminal $T_i$ has encoding functions $f_i^{(m,n)} = \{ f_{i,1}^{(m,n)},
\ldots, f_{i,n}^{(m,n)}\}$ such that $X_{i,t} = f_{i,t}^{(m,n)}(Y_{i,1}, \ldots, Y_{i, t-1}, S_i^m)$, for $1\leq i\leq K+L$ and $1\leq t \leq n$.

The decoder at a destination terminal $T_j$, $j=K+1, \ldots, K+L$, maps the channel output
$Y_j^n$ and its side information $S_j^m$ to an estimate
$\hat{S}_j^m$ by the decoding function $\hat{S}_j^m = g^{(m,n)}_j(Y_j^n, S_j^m)$. The probability of error is defined as $P_e^{(m,n)} = Pr\left\{ \bigcup_{j=K+1}^{K+L} \{\hat{S}_j^m \neq S_0^m \} \right\}$. The definition of an achievable source-channel code rate is similar to Definition \ref{d:achieve} using this new probability of error definition.

\begin{thm}\label{t:suff_cond_rbn}
For the DM relay-broadcast network with correlated relay and destination side information as in Figure \ref{f:rbn}, the source-channel code rate $r$ is achievable if, for all $i= 1, \ldots, N$,
\begin{align}\label{cond:relbrnet}
     r < \frac{I(X_{\pi(0:i-1)};Y_{\pi(i)} |X_{\pi(i)}, \ldots,X_{\pi(K+L)})}{H(S_0|S_{\pi(i)})},
\end{align}
for an $N$ satisfying $L \leq N \leq K+L$, and for some injection $\pi(\cdot)$ from $\{0,\ldots, N\}$ to $\{0,\ldots, K+L\}$ such that $\pi(0)=0$ and $\{K+1, K+2, \ldots, K+L\} \subseteq \pi(1:N)$, and some input distribution $p(x_0, \ldots, x_{K+L})$.
\vspace{.1in}
\end{thm}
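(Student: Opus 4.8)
The plan is to reduce the relay-broadcast setting to the ordered decode-and-forward (DF) analysis already carried out for Theorem~\ref{t:suff_cond}, treating the $L$ destinations as additional cooperating terminals that happen to also produce decoded-source estimates. First I would fix the injection $\pi$, the integer $N$, and the input distribution $p(x_0,\ldots,x_{K+L})$ as in the statement, and relabel the terminals so that $\pi$ prescribes the decoding order $T_{\pi(0)}=T_0,T_{\pi(1)},\ldots,T_{\pi(N)}$. Every terminal outside the image of $\pi$ is declared inactive and its channel input is set to a constant symbol, exactly as $\mathcal{X}_{K+1}$ is taken constant in Theorem~\ref{t:suff_cond}; this makes conditioning on the inputs of the ``not-yet-decoded'' terminals in (\ref{cond:relbrnet}) well defined. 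The crucial structural observation is the constraint $\{K+1,\ldots,K+L\}\subseteq\pi(1:N)$: it forces \emph{every} destination to appear somewhere in the decoding chain, so that proving ``each terminal in the chain recovers $S_0^m$'' automatically proves ``each destination recovers $S_0^m$.''

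Next I would run one of the two DF constructions used for Theorem~\ref{t:suff_cond} --- either regular block-Markov encoding with sliding-window joint source-channel decoding (no binning) or semi-regular encoding with nested backward decoding (with binning) --- essentially verbatim, but with the cooperating set enlarged to include the active destinations. The only formal change is that terminals $T_{K+1},\ldots,T_{K+L}$ both emit a block-Markov cooperative codeword and form a source estimate; since a destination's transmitted signal is a deterministic function of what it decoded in earlier blocks, it plays exactly the same role in the coherent combining as a relay's signal. Consequently the per-terminal error analysis is unchanged: for the $i$-th terminal in the order, correct decoding of the current source block hinges on the joint-typicality (or list-decoding) event governed by the numerator of (\ref{cond:relbrnet}), $I(X_{\pi(0:i-1)};Y_{\pi(i)}\mid X_{\pi(i:N)})$ (with inactive inputs held constant), weighed against the residual source uncertainty $H(S_0\mid S_{\pi(i)})$ in the denominator.

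To finish, I would bound the network error probability $P_e^{(m,n)}=\Pr\{\bigcup_{j=K+1}^{K+L}\{\hat S_j^m\neq S_0^m\}\}$ by a union bound over all $N$ terminals in the decoding chain, of which the $L$ destinations form a subset. Each summand vanishes as $m,n\to\infty$ whenever its corresponding inequality in (\ref{cond:relbrnet}) holds strictly, so imposing the bound for all $i=1,\ldots,N$ simultaneously drives $P_e^{(m,n)}\to 0$ and establishes achievability of $r$.

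I expect the main obstacle to be conceptual rather than computational: in Theorem~\ref{t:suff_cond} the single destination sits at the end of the chain ($\pi(N)=K+1$) and enjoys the aid of all preceding terminals, whereas here the destinations are interleaved among the relays, so one must check that a destination decoding \emph{early} in the order --- with comparatively little cooperation ahead of it --- still satisfies its own constraint, and that its subsequent cooperative transmissions remain consistent with the block-Markov codebook used by terminals decoding later. Verifying that allowing destinations to relay introduces no new error events beyond those already enumerated for relays, under a single common $\pi$ and input distribution, is the step requiring the most care; everything downstream is a direct transcription of the Theorem~\ref{t:suff_cond} argument.
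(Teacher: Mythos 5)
Your proposal is correct and follows essentially the same route as the paper: the paper proves Theorem~\ref{t:suff_cond_rbn} exactly by rerunning the DF constructions of Theorem~\ref{t:suff_cond} with a chosen subset of relays \emph{and} all destinations placed in the decoding order $\pi$, each active destination acting as a block-Markov relay for the terminals after it, and a union bound over the chain. Your added observations (constant inputs for terminals outside the image of $\pi$, and the role of $\{K+1,\ldots,K+L\}\subseteq\pi(1:N)$ in converting per-chain-terminal reliability into the required $P_e^{(m,n)}\to 0$) are precisely the details the paper leaves implicit.
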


When all the terminals in the network wish to decode the source sequence, i.e., $K=0$ in the relay-broadcast network, and the channel inputs of these terminals are deterministic, i.e., $|\mathcal{X}_{K+1}| = \ldots = |\mathcal{X}_{K+L}| = 1$, then the problem reduces to the problem of broadcasting a common source to multiple terminals each with different side information. This problem is studied in \cite{Tuncel:IT:06} and it is shown that the source-channel capacity in this setup is given by
\begin{align}
   \sup_{p(x_0)} \min_{i=1,\ldots, L} \frac{I(X_0;Y_i)}{H(S_0|S_i)}. \label{broadcast_cap}
\end{align}
For this special case, our achievable scheme based on regular encoding and sliding-window decoding reduces to the coding scheme in \cite{Tuncel:IT:06}, whereas the semi-regular encoding and backward decoding scheme, introduced in the proof of Theorem \ref{t:suff_cond} provides an alternative optimal coding scheme for the broadcasting problem.

Another special case for which the source-channel capacity can be completely characterized is when all the terminals are interested in receiving the source sequence and there is only a single terminal with transmission capability, that is, $K=0$ and $|\mathcal{X}_2| = \cdots =|\mathcal{X}_L| = 1$. We have the following result for this setup.

\begin{lem}\label{l:single_relay}
For the DM single-relay broadcast network with correlated relay and destination side information at the terminals, i.e., $K=0$, $L \geq 2$ and $|\mathcal{X}_2| = \cdots = |\mathcal{X}_L| = 1$, the source-channel capacity is given by
\begin{align}
   \sup_{p(x_0, x_1)} \min \left\{ \frac{I(X_0;Y_1 |X_1)}{H(S_0|S_1)}, \frac{I(X_0, X_1;Y_2)} {H(S_0|S_2)}, \ldots, \frac{I(X_0, X_1;Y_L)} {H(S_0|S_L)} \right\}. \label{c:lem}
\end{align}
\vspace{.1in}
\end{lem}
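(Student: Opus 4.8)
The plan is to prove the two inequalities separately: achievability as a direct specialization of Theorem~\ref{t:suff_cond_rbn}, and a matching converse built from one point-to-point bound per destination. For achievability I would apply Theorem~\ref{t:suff_cond_rbn} with $N=L$ and $\pi$ the identity map on $\{0,1,\ldots,L\}$; since $K=0$ this satisfies $\pi(0)=0$ and $\{1,\ldots,L\}=\pi(1:N)$, so the hypotheses hold. Condition~(\ref{cond:relbrnet}) then reads $r<I(X_0,\ldots,X_{i-1};Y_i\mid X_i,\ldots,X_L)/H(S_0\mid S_i)$ for $i=1,\ldots,L$. Because $|\mathcal{X}_2|=\cdots=|\mathcal{X}_L|=1$, the inputs $X_2,\ldots,X_L$ are constants and drop out of every mutual information: the $i=1$ term collapses to $I(X_0;Y_1\mid X_1)/H(S_0\mid S_1)$, each $i\geq 2$ term to $I(X_0,X_1;Y_i)/H(S_0\mid S_i)$, and the maximizing law reduces to $p(x_0,x_1)$. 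This is exactly the region inside the supremum of~(\ref{c:lem}), so every $r$ below~(\ref{c:lem}) is achievable.

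For the converse, each destination $T_j$ reconstructs $S_0^m$ from $(Y_j^n,S_j^m)$, so Fano's inequality gives $H(S_0^m\mid Y_j^n,S_j^m)\leq m\delta_m$ with $\delta_m\to 0$ (using $P_e^{(m,n)}\to 0$ and $|\mathcal{S}_0|<\infty$). Expanding $mH(S_0\mid S_j)=H(S_0^m\mid S_j^m)=I(S_0^m;Y_j^n\mid S_j^m)+H(S_0^m\mid Y_j^n,S_j^m)$, I would single-letterize the first term for the two kinds of cut. For $j=1$, the encoding constraints make $X_{0,t}$ a function of $S_0^m$ and $X_{1,t}$ a function of $(Y_1^{t-1},S_1^m)$, so the chain rule plus channel memorylessness yield $I(S_0^m;Y_1^n\mid S_1^m)\leq\sum_t I(X_{0,t};Y_{1,t}\mid X_{1,t})$. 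For $j\geq 2$, where $T_1$ acts as a relay, I would instead insert $X_{1,t}$ into the conditioning of the conditional-entropy term to obtain $I(S_0^m;Y_j^n\mid S_j^m)\leq\sum_t I(X_{0,t},X_{1,t};Y_{j,t})$; the point is that conditioning on $(S_0^m,Y_j^{t-1},S_j^m,X_{1,t})$ already fixes both non-constant inputs feeding $Y_{j,t}$.

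Introducing a uniform time-sharing variable $Q$ on $\{1,\ldots,n\}$ and setting $X_0=X_{0,Q}$, $X_1=X_{1,Q}$ turns these sums into $I(X_0;Y_1\mid X_1,Q)$ and $I(X_0,X_1;Y_j\mid Q)$, so the converse yields $r\leq\min_i[\cdots]$ \emph{with $Q$ present}. I expect the main obstacle to be discharging $Q$ so as to match the $Q$-free supremum in~(\ref{c:lem}), since for a minimum of ratios time-sharing could \emph{a priori} strictly help. I would resolve this by concavity in $p(x_0,x_1)$: the map $p(x_0,x_1)\mapsto I(X_0,X_1;Y_j)$ is concave (mutual information is concave in the law of the super-input $(X_0,X_1)$ for the fixed channel $p(y_j\mid x_0,x_1)$), and $p(x_0,x_1)\mapsto I(X_0;Y_1\mid X_1)=H(Y_1\mid X_1)-H(Y_1\mid X_0,X_1)$ is concave as well, because the subtracted term is linear while $H(Y_1\mid X_1)$ is a concave functional (conditional entropy is jointly concave in the joint law, and $p(x_1,y_1)$ is a linear image of $p(x_0,x_1)$). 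Writing $\bar p=\sum_q p(q)\,p(\cdot\mid Q=q)$, Jensen then bounds each of the $L$ functionals evaluated at $\bar p$ from below by its $Q$-conditioned value, simultaneously, so $\min_i[\cdots]_{\bar p}\geq\min_i[\cdots\mid Q]$ and the converse bound is at most $\sup_{p(x_0,x_1)}\min_i[\cdots]$. Matching achievability, this establishes~(\ref{c:lem}).
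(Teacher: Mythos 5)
Your proposal is correct and follows essentially the same route as the paper: achievability by specializing Theorem~\ref{t:suff_cond_rbn} (taking $N=L$, the identity injection, and letting the constant inputs $X_2,\ldots,X_L$ drop out), and a converse consisting of exactly the cut-set bounds the paper invokes, namely the cut around $T_1$ giving $I(X_0;Y_1|X_1)/H(S_0|S_1)$ and the cuts around each $T_j$, $j\geq 2$, giving $I(X_0,X_1;Y_j)/H(S_0|S_j)$. The only difference is one of presentation: you derive these cut-set bounds from first principles (Fano's inequality, single-letterization with $X_{1,t}$ inserted into the conditioning, and Jensen/concavity to discharge the time-sharing variable $Q$), whereas the paper cites the cut-set bound as a known result.
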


\begin{proof}
While the achievability is a direct result of Theorem \ref{t:suff_cond_rbn}, the converse follows from the cut-set bound. The first term in the minimization comes from the cut around terminal $T_1$, and the following terms follow from the cuts around each of the terminals $T_2, \ldots, T_L$.
\end{proof}

\section{Conclusion}\label{s:conc}

We have considered the reliable transmission of a discrete memoryless source signal over a cooperative multiple-relay relay-broadcast network in which the relays and the destinations all have access to a different side information signal correlated with the source signal. We have defined the source-channel code rate of a code for such a system as the number of source symbols per channel use that can be transmitted reliably to the destination. The goal is to identify the maximum source-channel code rate, which we have defined as the source-channel capacity.

We have developed two different joint source-channel cooperation schemes which generalize decode-and-forward relaying to joint source-channel cooperation in multiple-relay networks, and we have identified the achievable source-channel code rate for these schemes. The first scheme does not use explicit source binning and is based on regular block-Markov encoding with sliding-window joint source-channel decoding, while the second scheme applies explicit source binning and uses separate source and channel decoders based on semi-regular block-Markov encoding and backward decoding. Our schemes illustrate that these two encoding schemes, apart from leading to different amounts of delay in decoding, also require two different types of implementation in the joint source-channel coding context. The consequences of this variation on the design of the practical codes is the subject of future research. Moreover, we have proven the optimality of DF relaying in the joint source-channel setting for a physically degraded relay channel with degraded side information, and for a relay broadcast network when all the terminals are interested in decoding the source signal, but at most one of the destinations has the transmission capability.

\appendices

%

\section{Proof of Theorem \ref{t:suff_cond}: Regular encoding and sliding-window decoding}\label{ss:sliding}

For a given $(n, m)$ pair, we build the code as follows. Consider $B-K \geq 0$ source blocks, each consisting of $m$ source samples, enumerated as $S_0^m(b)$, $b=1,\ldots, B-K$. These source blocks will be transmitted to the destination $T_{K+1}$ over $B$ channel blocks, each consisting of $n$ channel uses. This corresponds to a source-channel code rate of $Bn/(B-K)m$ channel uses per source sample. Note that this rate gets arbitrarily close to $n/m$ as $B \rightarrow \infty$.

Without loss of generality, we consider the special injection $\pi(i)=i$ for $i=0,\ldots,K+1$, and present the achievable scheme for this permutation. Generalization to any other subset of the relays and any other decoding order follows similarly.

Fix $p(x_0, ..., x_{K})$ such that (\ref{cond1}) holds. We use superposition block Markov encoding and sliding-window decoding.

\emph{Source code generation: } Generate at random $M = 2^{m(H(S_0)+\epsilon)}$ i.i.d. source codewords $s^m(w_0)$ in $\mathcal{S}_0^m$, $w_0 \in [1,M]$, each drawn according to the distribution $\prod_{t=1}^m p(s_{0, t})$. This constitutes the source codebook.

\emph{Channel code generation: } Generate at random $M$ i.i.d. channel codewords $x^n_{K-1}(w_{K-1})$ in $\mathcal{X}_{K-1}^n$, $w_{K-1} \in [1,M]$, each drawn according to the distribution $\prod_{t=1}^n p(x_{K-1, t})$. This constitutes the random channel codebook of relay $T_{K-1}$.

Then for each $x^n_{K-1}(w_{K-1})$, generate at random $M$ conditionally i.i.d. channel codewords $x^n_{K-2}(w_{K-2}|w_{K-1})$, $w_{K-2} \in [1,M]$, each drawn according to the distribution $\prod_{t=1}^n p(x_{K-2, t} |x_{K-1, t}(w_{K-1}))$. This constitutes the random channel codebook of relay $T_{K-2}$.

We continue the generation of codebooks sequentially for the terminals $T_{K-3}$, $T_{K-4}$, \ldots, $T_0$. For each tuple of
\[
\{x_{i+1}^n(w_{i+1}|w_{i+2}, \ldots, w_{K-1}), x_{i+2}^n(w_{i+2}|w_{i+3}, \ldots, w_{K-1}), \ldots, x_{K-1}^n(w_{K-1}) \}
\]
generate $M$ conditionally independent channel codewords $x_i^n(w_i|w_{i+1}, \ldots, w_{K-1})$, $w_i \in [1,M]$, each drawn according to the distribution $\prod_{t=1}^n p(x_{i,t} |x_{i+1, t}(w_{i+1}, \ldots, w_{K-1}), \ldots, x_{K-1}^n(w_{K-1}) )$. This constitutes the random channel codebook of terminal $T_i$, for $i=K-3, K-4, \ldots, 0$.

Finally, the channel codebook generation process is repeated independently $K-1$ times, and these codebooks are used sequentially over different channel blocks, so that when the joint typicality decoding is applied simultaneously over $K$ consecutive channel blocks, the decoding errors corresponding to different blocks are independent of each other.

\begin{figure*}
\centering
\begin{tiny}
\begin{tabular}{|c|c|c|c|c|c|c|c|}
  \hline
  Terminal & Block $1$ & Block 2 & Block 3 \\ \hline
  $T_0$ & $x_0^n(w_0(1)|1, 1)$ & $x_0^n(w_0(2)|w_0(1), 1)$ & $x_0^n(w_0(3)|w_0(2), w_0(1))$ \\ \hline
  $T_1$ & $x_1^n(1|1)$ & $x_1^n(\hat{w}_0^1(1)|1)$ & $x_1^n(\hat{w}_0^1(2)|\hat{w}_0^1(1))$ \\ \hline
  $T_2$ & $x_2^n(1)$ & $x_2^n(1)$ & $x_2^n(\hat{w}_0^2(1))$ \\ \hline \hline
  Terminal & Block $B-2$ & Block B-1 & Block $B$ \\ \hline
  $T_0$ & $x_0^n(w_0(B-2)|w_0(B-3), w_0(B-4))$ & $x_0^n(1|w_0(B-2), w_0(B-3))$ & $x_0^n(1|1, w_0(B-2))$ \\ \hline
  $T_1$ & $x_1^n(\hat{w}_0^1(B-3)|\hat{w}_0^1(B-4))$ & $x_1^n(\hat{w}_0^1(B-2)|\hat{w}_0^1(B-3))$ & $x_1^n(1|\hat{w}_0^1(B-2))$ \\ \hline
  $T_2$  & $x_2^n(\hat{w}_0^2(B-4))$ & $x_2^n(\hat{w}_0^2(B-3))$ & $x_2^n(\hat{w}_0^2(B-2))$ \\ \hline
\end{tabular}
\caption{Channel codeword assignment for the regular encoding and sliding-window decoding scheme with $K=2$ relays. We transmit $B-2$ source blocks to the destination in $B$ channel blocks.} \label{f:table_sliding}
\end{tiny}
\end{figure*}

\emph{Encoding:} At channel block $b$, for $b=1, \ldots, B-K$, the source terminal $T_0$ finds the index $w_0(b)$ of the source outcome $s_0^m(b)$. The index $w_0(b)$ is set to $1$ if the source realization is not typical. We set $w_0(b)=1$ for $b>B-K$ for notational convenience.


From the decoding procedure, which will be presented next, at the beginning of block $b$ for $b=1, \ldots, B$, terminal $T_i$, $i = 0, \ldots, K$, has the estimates $\hat{w}_i(b-k+1)$ of $w_0(b-k+1)$ for $k \geq i+1$, where we let $\hat{w}_0(i)=w_0(i)$ for $i=1, \ldots,b$. Terminal $T_i$ then sends $x_i^n(\hat{w}_i(b-i)|\hat{w}_i(b-i-1), \ldots, \hat{w}_i(b-K))$ over channel block $b$ using the codebook whose turn has come, where we set $\hat{w}_i(b')=w_0(b')=1$ for every $b'<1$ and $b'>b-K+1$. See Fig. \ref{f:table_sliding} for an illustration of the encoding procedure in a network with $K=2$ relays.

\emph{Decoding: } At the end of block $b$, for $b=1, \ldots, B$, terminal $T_i$, $i=1, \ldots, K+1$, declares $\hat{w}_i(b-i+1) = w$ if there exists a unique index $w \in [1,M]$ for which
\begin{align*}
 (s^m(w), S^m_i(b-i+1)) \in A^m_\epsilon(S_0, S_i)
\end{align*}
and
\begin{align*}
(x_{i-1-j}^n(w|\hat{w}_i(b-i), \ldots, \hat{w}_i(b-j-K)), \ldots, &  x_{K-1}^n(\hat{w}_i(b-j-K)), Y_i^n(b-j)) \\
& \in A^n_\epsilon (X_{i-1-j}, \ldots, X_{K}, Y_i),
\end{align*}
in all the blocks $b-j$ for $j=0, \ldots, i-1$. An error is declared if no or more than one such index is found. Note that terminal $T_i$, $i=1, \ldots, K+1$, attempts decoding at channel blocks $i \leq b \leq B-K+i-1$.


\emph{Analysis of probability of error: } The probability of not having a typical source outcome vanishes as $m$ increases. Hence, in the rest of the error analysis we will focus on the analysis of error for decoding the typical source realization at the nodes.

Denote by $P_i(b)$ the probability of the event that a decoding error is made at terminal $T_i$, $i=1, \ldots, K+1$, in block $b$, $b=1, \ldots, B$, conditioned on the event that no decoding error is made in the previous blocks. That is, we have
\begin{align*}
P_i(b) \triangleq \mathrm{Pr} \{\hat{w}_i(b-i+1) \neq w_0(b-i+1) | E^c(b-1) \},
\end{align*}
where
\begin{align*}
E^c(b) \triangleq \{\hat{w}_i(b'-i+1) = w_0(b'-i+1) \mbox{ for all } b'=1, \ldots,b \mbox{ and } i=1, \ldots, K+1\}.
\end{align*}
Then the probability of error $P_e$ can be bounded as follows
\begin{align}
P_e &= \sum_{b=1}^B \mathrm{Pr} \{ \hat{w}_i(b-i+1) \neq w_0(b-i+1) \mbox{ for some } i \in \{1, \ldots, K\} | E^c(b-1) \} \cdot Pr \{ E^c(b-1) \}, \\
    & \leq \sum_{b=1}^B \sum_{i=1}^K P_i(b) \cdot Pr \{ E^c(b-1) \}.
\end{align}

Assuming no decoding error is made in the previous blocks, to calculate $P_i(b)$ we can assume
\[
\hat{w}_i(b'-i+1) = w_0(b'-i+1)
\]
for all $b'=1, \ldots, b-1$. Then terminal $T_i$, $i=1, \ldots, K$, declares $\hat{w}_i(b-i+1)=w$, for $b-i+1 \geq 1$, if $w$ is the unique index $w \in [1, M]$ such that
\begin{align}\label{joint_type_c1}
 & (s^m(w), S^m_i(b-i+1)) \in A^m_\epsilon(S_0, S_i)
\end{align}
and
\begin{align}\label{joint_type_c2}
(x_{i-1-j}^n(w|w(b-i), \ldots, w(b-j-K)),\ldots, x_{K-1}^n(w(b-j-K)), &   Y_i^n(b-j)) \nonumber \\
& \in A^n_\epsilon (X_{i-1-j}, \ldots, X_{K}, Y_i)
\end{align}
hold simultaneously for all the blocks $b-j$ for $j=0,1, \ldots, i-1$.

We define the following sets
\begin{align}
E^0_i(b) &\triangleq \{ w \in [1,M]: w \mbox{ satisfies } (\ref{joint_type_c1}) \}, \\
E_{i,j}(b) &\triangleq \{ w \in [1,M]: w \mbox{ satisfies } (\ref{joint_type_c2}) \}, \\
E_i(b)  & \triangleq \bigcap_{j=0}^{i-1} E_{i,j}(b),
\end{align}
for $i=1, \ldots, K$ and $b=1, \ldots, B$. Then, $P_i(b)$ can be written as
\begin{align}
P_i(b) &= \mathrm{Pr} \left\{ w_0(b-i+1) \notin E^0_i(b) \mbox{ or } w_0(b-i+1) \notin E_i(b) \nonumber \right. \\
    &~~~~~ \mbox{ or for some } w'\in E^0_i(b) \cap E_i(b)   \left.  \mbox{ but } s^m(w') \neq S_0^m(b-i+1) | E^c(b-1) \right\} \\
  & \leq \frac{1}{\mathrm{Pr} \{E^c(b-1) \}} \left[ \mathrm{Pr} \{ w_0(b-i+1) \notin E^0_i(b) \} \right. +  \mathrm{Pr} \{ w_0(b-i+1) \notin E_i(b) \} \nonumber \\
  & ~~~~~ \left. +  \mathrm{Pr} \{ \mbox{ for some } w' \in E^0_i(b) \cap E_i(b) \mbox{ but } s^m(w') \neq s_0^m(b-i+1) \} \right].
\end{align}
Hence, the probability of error can be bounded as
\begin{align}
P_e & \leq \sum_{b=1}^B \sum_{i=1}^K  \left[ \mathrm{Pr} \{ w_0(b-i+1) \notin E^0_i(b) \} \right. +  \mathrm{Pr} \{ w_0(b-i+1) \notin E_i(b) \} \nonumber \\
  & ~~~~~ \left. +  \mathrm{Pr} \{ \mbox{for some } w' \in E^0_i(b) \cap E_i(b) \mbox{ but } s^m(w') \neq s_0^m(b-i+1) \} \right].
\end{align}

The first two arguments in the above summation can be made arbitrarily small for large enough $m$ and $n$ \cite{Csiszar:book}. On the other hand, we have
\begin{align}
\mathrm{Pr} \{ \mbox{for some } w'\in E^0_i(b) \cap E_i(b) & \mbox{ but } s^m(w') \neq s_0^m(b-i+1) \}  \nonumber \\
 &\leq \sum_{w'\in [1,M], w' \neq w_0(b-i+1)} \mathrm{Pr}\{w' \in E^0_i(b) \cap E_i(b) \} \label{j_ineq_2} \\
 &= \sum_{w'\in [1,M], w' \neq w_0(b-i+1)} \mathrm{Pr} \{w' \in E^0_i(b)\} \cdot \mathrm{Pr} \{w' \in E_i(b) \}   \label{j_ineq_3} \\
 &= \sum_{w'\in [1,M], w' \neq w_0(b-i+1)} \mathrm{Pr} \{w' \in E^0_i(b)\} \cdot \prod_{j=0}^{i-1} \mathrm{Pr} \{w' \in E_{i,j}(b) \},   \label{j_ineq_4} \\
 &\leq (M-1) 2^{-m(I(S_0; S_i)-3\epsilon)} 2^{-n(I(X_0, \ldots, X_{i-1} ; Y_i |X_i, \ldots, X_{K-1})- 6i\epsilon)}   \label{j_ineq_5}
\end{align}
where (\ref{j_ineq_3}) follows from the independence among the source and the channel codebooks; (\ref{j_ineq_4}) follows from the independence among the channel codebooks at consecutive channel blocks; and (\ref{j_ineq_5}) follows from the chain rule of mutual information and the following inequality:
\[
\mathrm{Pr} \{w' \in E_{i,j}(b) \} \leq 2^{-n(I(X_{i-1-j}; Y_i |X_{i-j}, \ldots, X_{K-1})- 6\epsilon)}
\]
for $w' \neq w_0(b-i+1)$. Finally, substituting the value of $M$, we get
\begin{align}
\mathrm{Pr} \{ \mbox{for some } w'\in E^0_i(b) \cap E_i(b) & \mbox{ but } s^m(w') \neq s_0^m(b-i+1) \}  \nonumber \\
    &  \leq   2^{-m[H(S_0|S_i) - bI(X_0, \ldots, X_{i-1} ; Y_i |X_i, \ldots, X_{K-1})- \epsilon']}  \label{j_ineq_6},
\end{align}
where $\epsilon' \triangleq (4 + 6bi)\epsilon$.

For sources and channels satisfying the conditions of the theorem, by appropriately choosing $\epsilon$ and letting $m,n \rightarrow \infty$, we can have an arbitrarily small probability of error.

\section{Proof of Theorem \ref{t:suff_cond}: Semi-regular encoding and backward decoding}\label{ss:backward}

In backward decoding for the single relay channel \cite{Willems:thesis}, while the relay decodes each message block right after it is transmitted as in the sliding-window decoding scheme in Section \ref{ss:sliding}, the destination waits until all message blocks are transmitted and decodes them in the reverse order by removing the interference from the decoded messages. In backward decoding, each message is decoded at the destination using the signal received in a single block as opposed to signal combining in sliding-window decoding; however, backward decoding introduces additional delay, which grows with the number of relays in the network.

We use the multiple-relay backward decoding scheme for the transmission over the channel \cite{Sikora:IT:06}, \cite{Kramer:Allerton:05}. This is a nested backward decoding scheme constructed recursively such that, in each step of the recursion, a new node decodes the messages up to that point using backward decoding.

Rather than the joint decoding approach in Section \ref{ss:sliding}, in the case of backward decoding we use separate source and channel encoders/decoders at each node. In the case of separate source-channel coding, the source samples need to be compressed separately for each node such that each node receives enough information to decode the source sequence when combined with its own side information sequence. For example, in the case of a single relay terminal, the source terminal generates two independent bin indices, one for the relay and one for the destination. The transmission of the bin indices require using channel codes at different rates. However, note that we need a different rate for each receiving node in the network, and all the terminals participating in the transmission of the source terminals to the same node can use the same rate for their codes.

Due to the nested structure of the backward decoding scheme its complexity increases quickly with the increasing number of relays. Hence, for simplicity, we present the transmission scheme for $K=2$ relays with a decoding order of $T_1, T_2, T_3$. Extension to the setup with more relays is a direct generalization.

As in Section \ref{ss:sliding}, for a given $(n, m)$ pair, we build the code as follows. Fix $p(x_0, x_1, x_2)$ such that (\ref{cond1}) holds. A total of $B^2m$ source samples will be transmitted over $(B+1)^2n$ channel uses. This corresponds to a source-channel code rate of $(B+1)^2n/B^2m$ which gets arbitrarily close to $n/m$ as $B \rightarrow \infty$.

\emph{Source code generation: } Corresponding to each terminal $T_i$, for $i=1,2,3$, we consider $M_i = 2^{mR_i}$ bins, called the $T_i$ bins. All possible source outcomes $s_0^m \in T_{S_0}^m$ are partitioned randomly and uniformly into these bins, independently for each side information sequence, i.e., the distribution into $M_i$ bins for $S_i$ is independent of the distribution into $M_j$ bins for $S_j$ for $i \neq j$. This bin assignment, which corresponds to source compression, is made available to all the terminals.

\emph{Channel code generation: } For the channel codebook, generate at random $M_3$ channel codewords $x_2^n(j_3)$ for $j_3 \in [1,M_3]$ i.i.d. with $p(x_2^n(j_3))=\Pi_{t=1}^n p(x_{2,t})$, and index them as $x_2^n(j_3)$ with $j_3 \in [1,M_3]$.

Then for each $x_2^n(j_3)$, generate at random $M_2$ conditionally independent channel codewords $x_{1}^n(j_2|j_3)$, $j_2 \in[1,M_2]$, with probability $p(x_{1}^n|x_2^n(j_3))=\Pi_{t=1}^n p(x_{1,t}|x_{2,t}(j_3))$, and index them as $x_1^n(j_2| j_3)$ with $j_2 \in [1,M_2]$.

Finally, generate at random the codebook of size $M_1$ for each possible combination of $(x_1^n(j_2|j_3), x_2^n(j_3))$, with probability $p(x_0^n|x_1^n(j_2|j_3), x_2^n(j_3))=\Pi_{t=1}^n p(x_{0,t}|x_{1,t}(j_2, j_3), x_{2,t}(j_3))$, and index them as $x_0^n(j_1|j_2, j_3)$ with $j_1 \in [1,M_1]$.

\begin{figure*}
\begin{tiny}
\begin{tabular}{|l|l|l|l|l|l|l|l|}
  \hline
  Block & $1$ & 2 & $\cdots$ & B & B+1 \\ \hline
  $T_0$ & $x_0^n(w_{1,1}|1, 1)$ & $x_0^n(w_{2,1}|w_{1,2}, 1)$ & $\cdots$ & $x_0^n(w_{B,1}|w_{B-1,2}, 1)$ & $x_0^n(1|w_{B,2}, 1)$ \\ \hline
  $T_1$ & $x_1^n(1|1)$ & $x_1^n(\hat{w}_{1,2}^1|1)$ & $\cdots$ & $x_1^n(\hat{w}_{B-1,2}^1|1)$ & $x_1^n(\hat{w}_{B,2}^1|1)$ \\ \hline
  $T_2$ & $x_2^n(1)$ & $x_2^n(1)$ & $\cdots$ & $x_2^n(1)$ & $x_2^n(1)$ \\ \hline \hline
  Block & $(B+1)+1$ & $(B+1)+2$ & $\cdots$ & $(B+1)+B$ & $(B+1)+B+1$ \\ \hline
  $T_0$ & $x_0^n(w_{B+1,1}|1, w_{1,3})$ & $x_0^n(w_{B+2,1}|w_{B+1,2}, w_{2,3})$ & $\cdots$ & $x_0^n(w_{2B,1}|w_{2B-1,2}, w_{B,3})$ & $x_0^n(1|w_{2B,2}, 1)$ \\ \hline
  $T_1$ & $x_1^n(1|\hat{w}_{1,3}^1)$ & $x_1^n(\hat{w}_{B+1,2}^1|\hat{w}_{2,3}^1)$ & $\cdots$ & $x_1^n(\hat{w}_{2B-1,2}^1|\hat{w}_{B,3}^1)$ & $x_1^n(\hat{w}_{2B,2}^1|1)$ \\ \hline
  $T_2$  & $x_2^n(\hat{w}_{1,3}^2)$ & $x_2^n(\hat{w}_{2,3}^2)$ & $\cdots$ & $x_2^n(\hat{w}_{B,3}^2)$ & $x_2^n(1)$ \\ \hline \hline
   & $\cdots$ & $\cdots$ & $\cdots$ & $\cdots$ & $\cdots$ \\ \hline \hline
  Block & $k(B+1)+1$ & $k(B+1)+2$ & $\cdots$ & $k(B+1)+B$ & $k(B+1)+B+1$ \\ \hline
  $T_0$ & $x_0^n(w_{kB+1,1}|1, w_{(k-1)B+1,3})$ & $x_0^n(w_{kB+2,1}|w_{kB+1,2}, w_{(k-1)B+2,3})$ & $\cdots$ & $x_0^n(w_{kB+B,1}|w_{kB+B-1,2}, w_{(k-1)B+B,3})$ & $x_0^n(1|w_{kB+B,2}, 1)$ \\ \hline
  $T_1$ & $x_1^n(1|\hat{w}_{(k-1)B+1,3}^1)$ & $x_1^n(\hat{w}_{kB+1,2}^1|\hat{w}_{(k-1)B+2,3}^1)$ & $\cdots$ & $x_1^n(\hat{w}_{kB+B-1,2}^1|\hat{w}_{(k-1)B+B,3}^1)$ & $x_1^n(\hat{w}_{kB+B,2}^1|1)$ \\ \hline
  $T_2$  & $x_2^n(\hat{w}_{(k-1)B+1,3}^2)$ & $x_2^n(\hat{w}_{(k-1)B+2,3}^2)$ & $\cdots$ & $x_2^n(\hat{w}_{(k-1)B+B,3}^2)$ & $x_2^n(1)$ \\ \hline \hline
  & $\cdots$ & $\cdots$ & $\cdots$ & $\cdots$ & $\cdots$ \\ \hline \hline
  Block & $B(B+1)+1$ & $B(B+1)+2$ & $\cdots$ & $B(B+1)+B$ & $B(B+1)+B+1$ \\   \hline
  $T_0$ & $x_0^n(1|1, w_{B(B-1)+1,3})$ & $x_0^n(1|1, w_{B(B-1)+2,3})$ & $\cdots$ & $x_0^n(1|1, w_{B^2,3})$ & $x_0^n(1|1, 1)$ \\ \hline
  $T_1$ & $x_1^n(1|\hat{w}_{B(B-1)+1,3}^1)$ & $x_1^n(1|\hat{w}_{B(B-1)+2,3}^1)$ & $\cdots$ & $x_1^n(1|\hat{w}_{B^2,3}^1)$ & $x_1^n(1|1)$ \\ \hline
  $T_2$  & $x_2^n(\hat{w}_{B(B-1)+1,3}^2)$ & $x_2^n(\hat{w}_{B(B-1)+2,3}^2)$ & $\cdots$ & $x_2^n(\hat{w}_{B^2,3}^2)$ & $x_2^n(1)$ \\ \hline
\end{tabular}
\caption{Channel codeword assignment for the semi-regular encoding and backward decoding scheme with $K=2$ relays. We transmit $B^2$ source blocks to the destination in $(B+1)^2$ channel blocks.} \label{f:table_backward}
\end{tiny}
\end{figure*}

\emph{Encoding: } Consider a source sequence $s_0^{B^2m}$ of length
$B^2m$. Partition this sequence into $B^2$ portions, $s_{0,b}^m$,
$b=1,\ldots,B^2$. Similarly, partition the side information sequences
into $B^2$ length-$m$ blocks $s_i^{B^2m}=[s_{i,1}^m,\ldots,s_{i,B^2}^m]$ for $i=1,2,3$. The bin index of the $j$th block of the source output sequence $s_{0,j}^m$ with respect to $T_i$ bins is denoted by $w_{j, i}$. The estimate of $w_{j,i}$ at node $k$, $k=1, \ldots, K+1$, is denoted by $\hat{w}_{j,i}^k$. See Fig. \ref{f:table_backward} for an illustration of the encoding scheme.

In block 1, $T_0$ observes $s_{0,1}^m$, and finds the corresponding bin index $w_{1,1} \in [1,M_1]$. It transmits the channel codeword $x_0^n(w_{1,1}|1, 1)$. The relays $T_1$ and $T_2$ simply transmit $x_1^n(1|1)$ and $x_2^n(1)$, respectively. In block $2$, $T_0$ transmits the channel codeword
$x_0^n(w_{2,1}|w_{1,2}, 1)$. The relays $T_1$ and $T_2$ transmit $x_1^n(\hat{w}_{1,2}^1|1)$ and $x_2^n(1)$, respectively, where $\hat{w}_{1,2}^1$ is the $T_2$ bin index of the estimate $\hat{s}_{1,1}^m$ at the relay $T_1$. In the following blocks $b=2,\ldots,B$, the source terminal transmits the channel codeword $x_0^n(w_{b,1}|w_{b-1,2}, 1)$ where $w_{b,i} \in [1,M_i]$ for $i=1,2$. In block $B+1$, $T_0$ transmits $x_0^n(1| w_{2,B}, 1)$.

The first relay $T_1$ estimates the source block $s_{0,b-1}^m$ at the end of block $b-1$, denoted by $\hat{s}_{1,b-1}^m$, and finds the corresponding $T_2$ bin index $\hat{w}_{b-1, 2}^1 \in [1,M_2]$. At block $b$, for $b=2,\ldots,B+1$, $T_1$ transmits the channel codeword $x_1^n(\hat{w}_{2,b-1}^1| 1)$. For the first $B+1$ channel blocks terminal $T_2$ transmits $x_2^n(1)$.

At the end of channel block $B+1$, the relay $T_2$ decodes the first source block $s_{0,1}^m$ by backward decoding. Having estimated the source blocks $s_{0,1}^m, \ldots, s_{0,B}^m$ by backward decoding, $T_2$ joins the transmission for forwarding the $T_3$ bin indices of the first source block to the destination.

In the following channel blocks $T_0$ superposes the transmission of the second source block to $T_1$ and $T_2$ on the transmission of the $T_3$ bin indices of the first source block to $T_3$. In the channel blocks from $B+1$ up to $2B+1$, the source terminal transmits, in order,
\[
x_0^n(w_{B+1,1}|1, w_{1,3}), \ldots, x_0^n(w_{2B,1}|w_{2B-1,2}, w_{B,3}), x_0^n(1|w_{2B,2}, 1).
\]
The relay $T_1$ transmits, in order,
\[
x_1^n(1|\hat{w}_{1,3}^1), x_1^n(\hat{w}_{B+1,2}^1|\hat{w}_{2,3}^1), \ldots, x_1^n(\hat{w}_{2B-1,2}^1|\hat{w}_{B,3}^1), x_1^n(\hat{w}_{2B,2}^1|1).
\]
Finally, the second relay $T_2$ transmits $x_2^n(\hat{w}_{1,3}^2), \ldots, x_2^n(\hat{w}_{B,3}^2, x_2^n(1))$.

They continue similarly for a total of $B$ channel blocks of $Bn$ channel uses each. In each of these $B$ groups of channel blocks $Bm$ source samples are encoded by terminal $T_0$. In the last block of $(B+1)n$ channel uses, no new source samples are encoded. Terminal $T_0$ transmits
\[
x_0^n(1|1, w_{B(B-1)+1,3}), \ldots, x_0^n(1|1, w_{B^2,3}), x_0^n(1|1,1),
\]
while $T_1$ transmits $x_1^n(1|\hat{w}_{B(B-1)+1,3}^1), \ldots, x_1^n(1|\hat{w}_{B^2,3}^1), x_1^n(1|1)$, and having estimated the $T_3$ bin indices for the last $Bm$ source samples, $T_2$ transmits $x_2^n(\hat{w}_{B(B-1)+1,3}^2), \ldots, x_2^n(\hat{w}_{B^2,3}^2), x_2^n(1)$. It can be noted that the last channel block of $n$ channel uses is unused, but is included to simplify the expressions as it does not reduce the source-channel code rate in the limit of an infinite number of source and channel blocks.

\emph{Decoding and error probability analysis:} Since each node tries to decode each block of the source sample, we denote the estimate of source block $s_{0,b}^m$ at note $T_k$, $k=1, \ldots, K+1$, by $\hat{s}_{1,b}^m$.

The relay $T_1$ decodes the source signal by sequentially reconstructing source block $s_{0,b}^m$ at the end of the channel block, in which the corresponding $T_1$ bin index is transmitted by the source terminal $T_0$.

Consider channel block $k(B+1)+b$ for $k=0,1, \ldots, B-1$ and $b=1,\ldots,B$. Assume that at the end of block $k(B+1)+b-1$, $T_1$ has estimated the source blocks $s_{0,1}^m, \ldots, s_{0,kB+b-1}^m$ correctly, i.e., $\hat{s}_{1,b}^m = s_{0,b}^m$ for $b=1, \ldots, kB+b-1$. Hence, it can correctly find the $T_2$ bin index for $s_{0,kB+b-1}^m$ and the $T_3$ bin index for $s_{0,(k-1)B+b}^m$, i.e., $\hat{w}_{kB+b-1, 2}^1 = w_{kB+b-1, 2}$ and $\hat{w}_{(k-1)B+b, 3}^1 = w_{(k-1)B+b, 3}$. Using this information and its received signal $y_1^n(k(B+1)+b)$, the $T_1$ channel decoder attempts to decode $w_{kB+b, 1}$, i.e., the $T_1$ bin index corresponding to $s_{0,kB+b}^m$, by looking for a unique index $w$ such that
\begin{align*}
(x_0^n(w|w_{kB+b-1, 2}, w_{(k-1)B+b, 3}), x_1^n(\hat{w}_{kB+b-1, 2}^1| \hat{w}^1_{(k-1)B+b, 3}), x_2^n(\hat{w}^2_{(k-1)B+b, 3}), Y_1^n(b)) \in A^n_\epsilon (X_1, X_2, X_3, Y_1),
\end{align*}
where $\hat{w}_{b, 3}^i = w_{b,3} = 1$ for $i=1,2$, if $b<1$. If such a unique index is found, then it is assigned as the bin index estimation of the source sequence $s_{0,b}^m$ at node $T_1$, $w_{b,1}^1$.

The estimated bin index is then given to the $T_1$ source decoder. With the $T_1$ bin index and the side information $s_{1,kB+b}^m$, the $T_1$ source decoder estimates $s_{0,kB+b}^m$ by looking for a unique source codeword which is jointly typical with $S_1^m(kB+b)$ and whose bin index is $\hat{w}_{kB+b,1}^1$. If such a unique codeword is found, it is assigned as the source estimation at $T_1$, denoted by $\hat{s}_{1,b}^m$.

At the end of channel block $k(B+1)+B+1$, for $k=0,\ldots, B-1$, the relay $T_2$ decodes the $T_2$ bin index for the source block $s_{0,kB+B}^m$. It then continues decoding in the reverse order by backward decoding. Assuming that it has estimated the source blocks $s_{0,1}^m, \ldots, s_{0,(k-1)B+B}^m$ and $s_{0,kB+b+1}^m, \ldots, s_{0,kB+B}^m$ correctly, it can find the $T_1$ bin index of $s_{0,kB+b+1}^m$ and the $T_3$ bin index of $s_{0,(k-1)B+b}^m$. Using this information and its received signal $Y_2^n(k(B+1)+b)$, the $T_2$ channel decoder attempts to decode $w_{kB+b, 2}$ by looking for a unique index $w$ such that
\begin{align*}
(x_0^n(w_{kB+b+1, 1}|w, w_{(k-1)B+b, 3}), x_1^n(w| \hat{w}^1_{(k-1)B+b, 3}), x_2^n(\hat{w}^2_{(k-1)B+b, 3}), & Y_2^n(k(B+1)+b)) \\
& \in A^n_\epsilon (X_1, X_2, X_3, Y_2).
\end{align*}
If such a unique index is found, then it is assigned as the bin index estimation of the source sequence $s_0^m(kB+b)$ at node $T_2$, $\hat{w}_{kB+b, 2}$.

The estimated bin index is then given to the $T_2$ source decoder. With the $T_2$ bin index and the side information $s_{2,kB+b}^m$, the $T_2$ source decoder estimates $s_{0,kB+b}^m$ by looking for a unique source codeword which is jointly typical with $S_2^m(kB+b)$ and whose bin index is $\hat{w}_{kB+b,2}^2$. If such a unique codeword is found, it is assigned as the source estimation at $T_1$, denoted by $\hat{s}_{2,kB+b}^m$.


Decoding at the destination node $T_3$ is also done using backward decoding, but the destination waits till the end of channel block $B(B+1)+B$. It
first tries to decode the $T_3$ bin index of the last source block $s_{0,B^2}^m$ using the received signal at channel block $B(B+1)+B$. Consider decoding of $s_0^m((k-1)B+b)$ at channel block $k(B+1)+b$ for $k=1, \ldots, B$ and $b=1, \ldots, B$. Assuming that it has decoded the source blocks $s_{0,(k-1)B+b+1}^m, \ldots, s_{0,B^2}^m$ correctly, it can find the $T_1$ bin index of $s_{0,kB+b}^m$ and the $T_2$ bin index of $s_{0,kB+b-1}^m$. Using these information and its received signal $Y_3^n(k(B+1)+b)$, the $T_3$ channel decoder attempts to decode $w_{(k-1)B+b, 3}$ by looking for a unique index $w$ such that
\begin{align*}
(x_0^n(w_{kB+b, 1}| w_{kB+b-1, 2}, w), x_1^n(\hat{w}^1_{kB+b-1, 2}| w), x_2^n(w), & Y_3^n(k(B+1)+b)) \\
& \in A^n_\epsilon (X_1, X_2, X_3, Y_3).
\end{align*}
If such a unique index is found, then it is assigned as the bin index estimation of the source sequence $s_0^m((k-1)B+b)$ at node $T_2$, $\hat{w}_{kB+b, 2}$.

The estimated bin index is then given to the $T_3$ source decoder. With the $T_3$ bin index and the side information $s_{3,(k-1)B+b}^m$, the $T_3$ source decoder estimates $s_{0,(k-1)B+b}^m$ by looking for a unique source codeword that is jointly typical with $S_3^m((k-1)B+b)$ and whose bin index is $\hat{w}_{(k-1)B+b,3}^3$.


We upper bound the probability of error by counting errors made not only at the destination, but at any terminal in the network.
\begin{eqnarray}
  P_e & \leq & \sum_{s^{B^2m}} \mathrm{Pr} \left\{ \bigcup_{b=1, \ldots, B^2} \bigcup_{k=1, \ldots, K+1} \{\hat{S}_{b,k}^m \neq s_{0,b}^m | S^{B^2m} = s^{B^2m} \} \right\}  Pr \left\{ S^{B^2m} = s^{B^2m}  \right\}.
\end{eqnarray}
We can write the above union in a recursive manner by considering the decoding order of the backward decoding scheme; that is, the above event of having an error in estimating any of the source blocks at any of the nodes can be written as the union of error events, such that each corresponds to an error event at one stage of decoding given that no error has been made previously. Note that the decoding order is as follows: $\hat{s}_{1,1}^m, \hat{s}_{1,2}^m, \ldots, \hat{s}_{1,B}^m$, $\hat{s}_{2,B}^m, \ldots, \hat{s}_{2,1}^m$, $\hat{s}_{1,B+1}^m, \ldots, \hat{s}_{1,2B}^m, \hat{s}_{2,2B}^m, \ldots, \hat{s}_{2,B+1}^m, \ldots$, $\hat{s}_{1,(B-1)B+1}^m, \ldots, \hat{s}_{1,B^2}^m$, $\hat{s}_{2,B^2}^m, \ldots, \hat{s}_{2,(B-1)B+1}^m$, $\hat{s}_{3,B^2}^m, \hat{s}_{3,B^2-1}^m,\ldots, \hat{s}_{3,1}^m$. We have
\begin{eqnarray}
  P_e & \leq & \sum_{s^{B^2m}} \left\{ \sum_{b=1, \ldots, B^2} \sum_{k=1, \ldots, K+1} \mathrm{Pr} \{\hat{S}_{b,k}^m \neq s_{0,b}^m | E_{b,k}, S^{B^2m} = s^{B^2m} \}  \nonumber \right. \\
  && \left. \cdot \mathrm{Pr}\{E_{b,k} | S^{B^2m} = s^{B^2m} \} \right\}  Pr \left\{ S^{B^2m} = s^{B^2m} \right\},
\end{eqnarray}
where we define $E_{b,k}$ as the event that all previous estimations are correct when $T_k$ estimates $S_{0,b}^m$. Then we get
\begin{eqnarray}
  P_e & \leq & \sum_{s^{B^2m}} \left\{ \sum_{b} \sum_{k} \mathrm{Pr} \{\hat{S}_{b,k}^m \neq s_{0,b}^m | E_{b,k}, S^{B^2m} = s^{B^2m} \} \right\} \cdot Pr \left\{ S^{B^2m} = s^{B^2m} \right\}, \\
  & \leq & \sum_{s^{B^2m}} \left\{ \sum_{b} \sum_{k} \mathrm{Pr} \{\hat{S}_{b,k}^m \neq s_{0,b}^m,  \hat{w}_{b,k}^k = w(b,k) | E_{b,k}, S^{B^2m} = s^{B^2m} \} \right. \nonumber \\
  & & \left. + \mathrm{Pr} \{\hat{S}_{b,k}^m \neq s_{0,b}^m,  \hat{w}_{b,k}^k \neq w(b,k) |  E_{b,k}, S^{B^2m} = s^{B^2m} \} \right\} \cdot Pr \left\{ S^{B^2m} = s^{B^2m} \right\}, \\
  & \leq & \sum_{s^{B^2m}} \left\{ \sum_{b} \sum_{k} \mathrm{Pr} \{\hat{S}_{b,k}^m \neq s_{0,b}^m |  \hat{w}_{b,k}^k = w(b,k), E_{b,k}, S^{B^2m} = s^{B^2m} \} \right. \nonumber \\
  & & \left. + \mathrm{Pr} \{\hat{w}_{b,k}^k \neq w(b,k) |  E_{b,k}, S^{B^2m} = s^{B^2m} \} \right\} \cdot Pr \left\{ S^{B^2m} = s^{B^2m} \right\}, \\
  & = &  \sum_{b} \sum_{k} \mathrm{Pr} \{\hat{S}_{b,k}^m \neq s_{0,b}^m |  \hat{w}_{b,k}^k = w(b,k), E_{b,k} \} \nonumber \\
  & &  + \sum_{s^{B^2m}} \left\{ \sum_{b} \sum_{k} \mathrm{Pr} \{\hat{w}_{b,k}^k \neq w(b,k) |  E_{b,k}, S^{B^2m} = s^{B^2m} \} \right\}  Pr \left\{ S^{B^2m} = s^{B^2m} \right\}, \\
  & = &  \sum_{b} \sum_{k} \mathrm{Pr} \{\hat{S}_{b,k}^m \neq s_{0,b}^m |  \hat{w}_{b,k}^k = w(b,k), E_{b,k} \} \nonumber \\
  & &  + \sum_{b}  \sum_{k} \sum_{w(b,k) \in \{1, \ldots, 2^{mR_k}\}} \mathrm{Pr} \{\hat{w}_{b,k}^k \neq w(b,k) |  E_{b,k}, W_{b,k} = w(b,k) \} \mathrm{Pr} \{W_{b,k} = w(b,k)\}, \\
  & = &  \sum_{b} \sum_{k} \mathrm{Pr} \{\hat{S}_{b,k}^m \neq s_{0,b}^m |  \hat{w}_{b,k}^k = w(b,k), E_{b,k} \} \nonumber \\
  & &  + \sum_{b}  \sum_{k} \frac{1}{2^{mR_k}} \sum_{w(b,k) \in \{1, \ldots, 2^{mR_k}\}} \mathrm{Pr} \{\hat{w}_{b,k}^k \neq w(b,k) |  E_{b,k}, W_{b,k} = w(b,k) \}. \label{final_eq}
\end{eqnarray}
Now, note in (\ref{final_eq}) that each term in the first summation corresponds to the error event at the source decoder of $T_k$ given that it is provided with the correct bin index, and each term in the second summation corresponds to the error event at the channel decoder of $T_k$, both conditioned on the fact that all the estimations up to that instant are correct. Following the usual arguments, we get
\begin{align}
  P_{e,i}(b) \leq &  \sum_{b} \sum_{k} \left \{ 2^{mR_k} \cdot 2^{m(H(S_0|S_k) + \epsilon)} + 2^{mR_k} \cdot 2^{-n(I(X_0^{k-1} ; Y_k| X_k^{K+1}) + \epsilon)} \right\}
\end{align}
Setting $R_k = H(S_0|S_k)$, and letting $m,n \rightarrow \infty$ while $\epsilon \rightarrow 0$, the error probability vanishes under the assumptions of the theorem.

\bibliographystyle{ieeetran}
\bibliography{ref3}

\end{document}